\documentclass{amsart}
\usepackage[letterpaper,hmargin=1.2in,vmargin=1.88in]{geometry}

\usepackage{amsmath, amsthm, amssymb}
\usepackage{algorithm}
\usepackage{algpseudocode}
\algnewcommand\algorithmicinput{\textbf{Input:}}
\algnewcommand\INPUT{\item[\algorithmicinput]}
\algnewcommand\algorithmicoutput{\textbf{Output:}}
\algnewcommand\OUTPUT{\item[\algorithmicoutput]}
\algnewcommand\algorithmicoracle{\textbf{Oracle:}}
\algnewcommand\ORACLE{\item[\algorithmicoracle]}
\makeatletter
\algnewcommand{\LineComment}[1]{\Statex \hskip\ALG@thistlm \(\triangleright\) #1}
\makeatother

\usepackage{url}

\usepackage{mathtools}
\usepackage{tikz}
\usetikzlibrary{arrows, positioning, shadows}

\newtheorem{theorem}{Theorem}[section]
\newtheorem{lemma}[theorem]{Lemma}

\theoremstyle{definition}

\newtheorem{problem}[theorem]{Problem}

\theoremstyle{remark}
\newtheorem{remark}[theorem]{Remark}

\makeatletter
\def\paragraph{\@startsection{paragraph}{4}%
  \z@\z@{-\fontdimen2\font}%
  {\normalfont\bfseries}}
\makeatother

\numberwithin{equation}{section}

\newcommand{\Z}{\mathbb Z}
\newcommand{\F}{\mathbb F}

\usepackage{tikz}
\usetikzlibrary{matrix,arrows}

\bibliographystyle{plainurl}

\title[Problems Equivalent to Beating Exponent 3/2 for Polynomial Factorization]{Algebraic Problems Equivalent to Beating Exponent 3/2 for Polynomial Factorization over Finite Fields}

\author{Zeyu Guo}
\author{Anand Kumar Narayanan}
\author{Chris Umans}
\address{Department of Computing and Mathematical Sciences, California Institute of Technology.}
\email{zguo,anandkn,umans@caltech.edu}
\thanks{The authors were supported by NSF grant CCF 1423544 and a Simons Foundation Investigator grant.}


\begin{document}

\begin{abstract}
The fastest known algorithm for factoring univariate polynomials over
finite fields is the Kedlaya-Umans \cite{ku} (fast modular
composition) implementation of the Kaltofen-Shoup algorithm
\cite[\S~2]{ks}. It is randomized and takes 
$\widetilde{O}(n^{3/2}\log q + n \log^2 q)$ time to factor polynomials of degree $n$ over the finite field $\mathbb{F}_q$ with $q$ elements. A significant open problem is if the $3/2$ exponent can be improved. We study a collection of algebraic problems and establish a web of reductions between them. A consequence is that an algorithm for any one of these problems with exponent better than $3/2$ would yield an algorithm for polynomial factorization with exponent better than $3/2$.
\end{abstract}

\maketitle

\section{Introduction}
A recent trend in discrete algorithms has been to establish very
efficient reductions between problems with polynomial time algorithms, with the intention of identifying barriers (conceptual or concrete) to improving the polynomial running time of the best known
algorithms. A standard example is the problem 3-SUM, which seems to
require essentially quadratic time, and which has been reduced to many other problems. More recently, the study of ``fine-grained'' complexity has broadened, with several connections established between central problems in discrete algorithms, and new conjectures beyond the 3-SUM conjecture entering the picture (see, e.g. \cite{agw,aw,aw2,pat,rz,wy,wil,ww}).\\ \\
In this paper we focus on a ``barrier'' in {\em algebraic} algorithms, that of improving the exponent $3/2$ for univariate polynomial factorization and several other problems. Generally, algebraic problems have two relevant ``size'' parameters -- $n$, and the field size $q$. It is typical for the dependence on $q$ to be
polylogarithmic (it is for all of the problems we consider), and so we focus on the exponent on $n$ in this work. We find that exponent $3/2$ seems to be a barrier for a number of problems. This points to a need to move beyond the so-called ``baby steps giant steps''
methodology which tends to give rise to the exponent 3/2 behavior. \\ \\
The reductions in this paper can be seen as giving evidence that
improving the 3/2 exponent may not be possible for these problems, but we believe that it ``merely'' gives evidence that this improvement requires a conceptual breakthrough (along the lines of going beyond the baby-steps giant-steps approach). Using the connections established in this paper, such a breakthrough for any one of the problems considered here would improve the exponent for all of them. \\ \\
In the discussion below, we use $\widetilde{O}$ to suppress $n^{o(1)}$
terms and $\log^{o(1)}q$ terms, in order to highlight the exponent on
$n$ that is our main object of study. We also use the phrase ``nearly
linear time reduction'' to mean a reduction that runs in time
$\widetilde{O}(n\log q)$, and the phrase ``3/2 exponent reducible'' to mean the weaker connection that shows that beating exponent 3/2 for one problem implies beating exponent 3/2 for the other. 
\subsection{Algebraic problems with 3/2 exponent algorithms}

We investigate the complexity of factoring a univariate polynomial over a finite field into its irreducible factors. The problem formally stated is,
\begin{itemize}
 \item \textsc{Factor:} \textit{Given a monic square free $f(x) \in \mathbb{F}_q[x]$ of degree $n$, write $f(x)$ as a product of its monic irreducible factors.}
\end{itemize}
The square free assumption is without loss of generality \cite{knu,yun}. \textsc{Factor} can be solved in randomized polynomial time \cite{ber} and there is an extensive line of research
\cite{cz,ks,gs} leading to a randomized algorithm \cite{ku} with exponent $3/2$. Surprisingly, even determining the {\em degree} of a single irreducible factor rapidly would be sufficient to improve the exponent of this algorithm. We formulate this problem as
\begin{itemize}
\item \textsc{Factor Degree:} \textit{Given a monic square free $f(x) \in \mathbb{F}_q[x]$, find the degree of an irreducible factor of $f(x)$.} 
\end{itemize}
and prove in \S~\ref{section_factor_degree} that \textsc{Factor} is
$3/2$-exponent reducible to \textsc{Factor Degree}. That is, an
algorithm for \textsc{Factor Degree} with exponent less than $3/2$
yields one for \textsc{Factor}. Observe that \textsc{Factor Degree}
merely seeks one, not necessarily all, irreducible factor degrees. We next investigate two linear algebraic problems, both we will
demonstrate to be nearly linear time reducible to \textsc{Factor}.
\begin{itemize}
\item \textsc{Frobenius Min-Poly:} \textit{Given a monic square free $f(x) \in \mathbb{F}_q[x]$, compute the minimal polynomial of the Frobenius endomorphism on $\mathbb{F}_q[x]/(f(x))$ which takes $a(x) \mod f(x)$ to $a(x)^q \mod f(x)$.}
\item \textsc{Carlitz Char-Poly:} \textit{Given a monic square free $f(x) \in \mathbb{F}_q[x]$, compute the characteristic polynomial of the Carlitz endomorphism on $\mathbb{F}_q[x]/(f(x))$ which takes $a(x) \mod f(x)$ to $xa(x) + a(x)^q \mod f(x)$.}
\end{itemize}
In \S~\ref{section_carlitz}, we prove that \textsc{Factor Degree} is nearly linear time reducible to \textsc{Carlitz Char-Poly}, under certain restrictions on the characteristic of $\F_q$. These restrictions were removed in \cite{nar} by passing from Carlitz to Drinfeld modules. In \S~\ref{section_minimal}, through a novel recursive argument, we prove that \textsc{Factor} is $3/2$-exponent reducible to \textsc{Frobenius Min-Poly}.\\ \\
\textsc{Frobenius Min-Poly} was known \cite{ks,kl} to be nearly linear time reducible to 
\begin{itemize}
\item \textsc{Automorphism Projection:} \textit{Given a monic square free $f(x) \in \mathbb{F}_q[x]$, $\alpha\in \mathbb{F}_q[x]/(f(x))$ and an $\mathbb{F}_q$-linear map $u:\mathbb{F}_q[x]/(f(x)) \longrightarrow \mathbb{F}_q$, compute $u(\alpha^{q^i}), \forall i \in \{1,2,\ldots,\deg(f)\}$.}
\end{itemize}
Thus, as a consequence of the reduction in \S~\ref{section_minimal}, we conclude that \textsc{Factor} is $3/2$-exponent
reducible to \textsc{Automorphism Projection}. This should be
contrasted with the connection established in \cite{ks,kl}.  They show that \textsc{Factor} is nearly linear time
reducible to \textsc{Automorphism Projection} {\em assuming an
  $\mathbb{F}_q$-linear straight line program algorithm} for
\textsc{Automorphism Projection}. This assumption allows them to use
the ``transpose'' problem \textsc{Automorphism Evaluation}. Our
reduction to \textsc{Automorphism Projection} is novel, direct, and
 holds without any assumptions. \\ \\ 
The final two problems pertain to zero testing Moore and Vandermonde determinants.
\begin{itemize}
\item \textsc{Moore-Det:} \textit{Given a monic square free $f(x) \in \mathbb{F}_q[x]$ and a positive integer $m$, decide if the determinant of the $m$ by $m$ square matrix with entries $m_{ij}:=x^{j{q^i}} \mod f(x)$ is zero}.
\item \textsc{Vandermonde-Det:} \textit{Given a monic square free $f(x) \in \mathbb{F}_q[x]$ and a positive integer $b \leq \sqrt{\deg(f)}$, decide if the determinant of the Vandermonde matrix with first row $$(x^{q^i} \mod f(x), i = 0,1,2,\ldots,b-1,b,2b,3b,\ldots,(b-1)b,b^2)$$ is zero}.
\end{itemize}
In \S~\ref{section_moore}, we prove that \textsc{Factor Degree} is nearly linear time reducible to each of these problems and that each of these problems is nearly linear time reducible to \textsc{Factor}. In summary, we have the following diagram where solid lines denote nearly linear time reductions and dotted lines denote $3/2$-exponent reductions.
\begin{center}
\begin{tikzpicture}[node distance=10mm and 15mm,
every node/.style={rectangle, minimum size=6mm, rounded corners=1.5mm, thick ,draw=black,  top color=white, drop shadow,  align=center},
ar/.style={->,>=stealth, semithick, shorten >= 1pt, shorten <= 1pt},
dashed_ar/.style={->,>=stealth,dashed, semithick, shorten >= 1pt, shorten <= 1pt},
font=\scshape
]
\node (facdeg) {Factor Degree};
\node (fac) [right=of facdeg] {Factor};
\node (carlitz) [above=of fac] {Carlitz\\Char-Poly};
\node (frob) [below=of fac] {Frobenius \\Min-Poly};
\node (moore) [above=of facdeg] {Moore-Det};
\node (vander) [below=of facdeg] {Vandermonde\\Det};
\node (autproj) [right=of fac] {Automorphism\\Projection};
\draw [ar] (facdeg.35) to [ bend left=10] (carlitz.west);
\draw [ar] (facdeg.25) to [bend left=20]  (fac.145);
\draw [dashed_ar] (fac.315)  to [bend left=10] (frob.59);
\draw [ar] (carlitz.south) to (fac.north);
\draw [ar] (frob.north) to (fac.south);
\draw [dashed_ar] (fac.215) to [bend left=20]  (facdeg.335);
\draw [ar] (moore.east) to (fac.115);
\draw [ar] (frob.east) to [ bend right=10] (autproj.south);
\draw [ar] (facdeg.north) to (moore.south);
\draw [ar] (facdeg.south) to (vander.north);
\draw [ar] (vander.east) to (fac.235);
\end{tikzpicture}
\end{center}
An interesting open question is if the dotted lines can be made
solid. Except for \textsc{Automorphism Projection}, every listed
problem has a known randomized algorithm with exponent $3/2$. If the
matrix multiplication exponent is $2$, then a randomized algorithm for
\textsc{Automorphism Projection} with exponent $3/2$ is known. Another
open problem is if this dependence on the matrix multiplication
exponent can be removed -- perhaps by reducing \textsc{automorphism
  projection} to one of the other problems in the figure. Regardless, an algorithm
for any of the problems in the figure with exponent less that $3/2$
would yield an algorithm with exponent $3/2$ for \textsc{Factor}, and
this is one of the main points of this paper. 
\section{Factorization and Finding a Factor Degree}\label{section_factor_degree}

Clearly, if one can solve the problem \textsc{factor} in time $T(n,
q)$ then one can solve the problem \textsc{factor degree} in time
$T(n, q)$. In this section we show a reduction in the reverse
direction, which leads to the surprising conclusion that one  {\em only} needs to compute the {\em degree} of a single irreducible factor of the polynomial $f(x)$ with exponent better than 3/2 to be able to factor $f(x)$ completely with exponent better than 3/2.

\begin{theorem}
If there is an algorithm that solves \textsc{factor degree} in the time $T(n,q)$ where $T(n,q)=\Omega(n\log^2 q)$ \footnote{The assumption $T(n,q)=\Omega(n\log^2 q)$ is without loss of generality. For otherwise we slow down an algorithm with runtime $T(n,q)$ until it is $\Omega(n\log^2 q)$.}, then there is an algorithm that solves \textsc{factor} in time $\widetilde{O}(n\cdot T(n,q)^{1/3} \log^{4/3} q)$.
\end{theorem}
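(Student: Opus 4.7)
The plan is to parameterize by a threshold $B$ (chosen as a function of $T(n,q)$) and split the factorization into two phases: small factors handled by the standard distinct-degree factorization, and large factors handled by the \textsc{Factor Degree} oracle. In the first phase, I would run the Kaltofen-Shoup / Kedlaya-Umans distinct-degree algorithm with BSGS parameter $k=\sqrt{B}$, terminated at degree $B$; this costs $\widetilde{O}(\sqrt{B}\,n+n\log q)$ and leaves a cofactor $f'$ all of whose irreducible factors have degree strictly greater than $B$.

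In the second phase, I would loop: call \textsc{Factor Degree} on the current $f'$ to obtain some $d>B$, compute $x^{q^d}\bmod f'$ from the once-and-for-all computed $x^q\bmod f$ (reduced mod $f'$) via $O(\log d)$ Kedlaya-Umans modular compositions of cost $\widetilde{O}(n)$ each, extract $g:=\gcd(x^{q^d}-x,\,f')$ (which contains at least the one irreducible factor of degree $d$), and replace $f'$ by $f'/g$. Since each iteration strips more than $B$ from $\deg f'$, the loop runs at most $n/B$ times and contributes $\widetilde{O}\bigl((n/B)\,T(n,q)\bigr)$ in total.

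Each extracted block $g$ is a product of at most $n^{o(1)}$ equal-degree sub-blocks (indexed by the divisors of $d$ exceeding $B$); these are separated by another cheap distinct-degree pass, and each equal-degree piece is finished by Cantor-Zassenhaus, where the modular exponentiation $r^{(q^d-1)/2}\bmod g$ is computed via the repeated-doubling norm identity $N_{2k}(s)=N_k(s)\cdot N_k(s)^{q^k}$. This is the key technical ingredient: it reduces the cost of one splitting round from the BSGS bound $\widetilde{O}(\sqrt{d}\,\deg g)$ to $\widetilde{O}(\deg g\cdot\log q)$, so Phase 3 contributes only $\widetilde{O}(n\log q)$ in total. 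Summing all phases gives $\widetilde{O}\bigl(\sqrt{B}\,n+(n/B)\,T(n,q)+n\log q\bigr)$; setting $B$ of order $\bigl(T(n,q)/\log^2 q\bigr)^{2/3}$ balances the first two terms and, after absorbing $n\log q$ using $T(n,q)=\Omega(n\log^2 q)$, yields the claimed $\widetilde{O}\bigl(n\,T(n,q)^{1/3}\log^{4/3} q\bigr)$.

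The main obstacle is not the outer two-phase structure but the equal-degree cleanup: a pathological $f'$, say a product of just two factors of degree $n/2$, will be returned as a single block by the oracle, and a naive equal-degree routine based on direct modular exponentiation would spend $\widetilde{O}(n^{3/2})$ on it and wipe out any gain from a faster oracle. The repeated-doubling norm computation (or any equivalent Frobenius-level speedup of modular exponentiation in $\F_q[x]/(g)$) is the technical lemma that keeps Phase 3 nearly linear and makes the whole reduction viable.
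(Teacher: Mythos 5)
Your proposal is correct and follows essentially the same route as the paper: a degree threshold ($B$ versus the paper's $t$), baby-steps--giant-steps distinct-degree splitting below the threshold, at most $n/B$ oracle calls above it, the same choice $B=(T(n,q)/\log^2 q)^{2/3}$, and a final nearly-linear equal-degree cleanup (your repeated-doubling norm trick is exactly the standard von zur Gathen--Shoup/Kaltofen--Shoup ingredient the paper invokes when it asserts equal-degree factorization runs in $\widetilde{O}(n\log^2 q)$ time). The only quibble is a bookkeeping slip in your Phase 1 cost, which should carry a $\log^2 q$ factor on the $\sqrt{B}\,n$ term for the stated choice of $B$ to balance the two phases; this does not affect the final bound.
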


Observe that when \textsc{factor degree} has an exponent 3/2
algorithm (as it does), this reduction recovers a 3/2 exponent
algorithm for \textsc{factor}. A sub-3/2 exponent algorithm for
\textsc{factor degree} implies a sub-3/2 exponent algorithm for
\textsc{factor}, with a nearly-linear time algorithm yielding exponent $4/3$ for \textsc{factor}.

\begin{proof}
We are given a monic, square-free polynomial $f(x) \in \F_q[x]$ of
degree $n$. Let $g(x)$ be the product of irreducible factors of $f(x)$
with degrees at most $t$ (for a parameter $t$ to be chosen later). If
$s(x)$ is defined as
\[s(x) = \prod_{i=1}^t(x^{q^i} - x)^{a_i},\]
for some positive integers $a_1, a_2, \ldots, a_t$, then we have that $g(x) = \gcd(s(x), f(x))$. Using fast modular
composition \cite{ku} and the method of Kaltofen-Shoup \cite{ks}, we
can compute $s(x) \bmod f(x)$ in time $\widetilde{O}(n\sqrt{t}
\log^2 q)$ time. We then proceed to factor $g(x)$ completely, using
the Kedlaya-Umans implementation of the Kaltofen-Shoup algorithm. The
bottleneck in this algorithm is computing the splitting polynomials,
which are all polynomials of the form of $s(x)$, with $i$ ranging from
$1$ up to $t' \le t$. This portion of the algorithm runs in time
$\widetilde{O}(n\sqrt{t}\log^2 q)$ and factors $g(x)$ completely.

Now we invoke the algorithm to solve \textsc{factor degree}, on input
$f(x)/g(x)$. Upon finding the degree $d$ of an irreducible factor, we
compute $\gcd(x^{q^d} - x \bmod f(x), f(x))$ to split off the factors
with that degree. We then repeat. The number of repetitions is bounded by $n/t$,
since each irreducible factor of $f(x)/g(x)$ has degree at least
$t$. 
Each repetition takes time $T(n,q)+\widetilde{O}(n\log^2 q)$.
Thus this portion of the algorithm runs in time
$\widetilde{O}(n/t\cdot (T(n,q)+n\log^2 q))=\widetilde{O}(n/t \cdot T(n,q))$. 
Finally we factor completely using equal-degree
factorization which takes $\widetilde{O}(n\log^2 q)$ time. Optimizing, we set $t = (T(n,q)/\log^2 q)^{2/3}$, and the overall running time
becomes
\[
\widetilde{O}(n\cdot T(n,q)^{1/3} \log^{4/3} q)
\]
  for each of the two stages, and hence in total as well.
\end{proof}

\section{Factoring and Minimal Polynomial of Frobenius}\label{section_minimal}
For a monic square free $f(x)$, let $g(\lambda) \in \F_q[\lambda]$ denote the minimal polynomial of the $q^{th}$ power Frobenius endomorphism $\sigma: \F_q[x]/(f(x)) \rightarrow \F_q[x]/(f(x))$. That is, $g(\lambda)$ is the unique nonzero monic polynomial of least degree such that the endomorphism $g(\sigma)$ on $\F_q[x]/(f(x))$ is zero. The problem \textsc{Frobenius Min-Poly} is to determine $g(\lambda)$ given $f(x)$. Since $g(\lambda)$ is the least common multiple of $\lambda^d-1$ as $d$ runs through the degrees of the irreducible factors of $f(x)$, \textsc{Frobenius Min-Poly} is nearly linear time reducible to \textsc{Factor}. In this section, we conversely prove that \textsc{Factor} is $3/2$-exponent reducible to \textsc{Frobenius Min-Poly}. 

Let $\mathtt{FrobMinPoly}$ be an oracle that solves \textsc{Frobenius Min-Poly}. We present an algorithm $\mathtt{Factor}$ that invokes $\mathtt{FrobMinPoly}$ and solves \textsc{Factor}.
For $k\in\mathbb{N}^+$, denote by $\Phi_k$ the $k$th cyclotomic polynomial over $\F_q$. Write $\phi(\cdot)$ for the Euler totient function.
\begin{algorithm}[H]
\caption{$\mathtt{Factor}(f(x))$}\label{algo2}
\begin{algorithmic}[1]
\INPUT  Monic square free polynomial $f(x) \in \F_q[x]$ of degree $n$.
\OUTPUT Monic irreducible factors of $f(x)$.
\ORACLE $\mathtt{FrobMinPoly}$
\State Using \cite{ku}, output and remove all monic irreducible factors of $f(x)$ of degree at most $n^{2/3}$.  If at most one irreducible factor of degree greater than $n^{2/3}$ remains, output and exit.
\State $g(\lambda) \gets \mathtt{FrobMinPoly}(f(x))$.
\State Perform square free factorization on $g(\lambda)$, and then run $\mathtt{Factor}$ recursively on the outputs to obtain the list of monic irreducible factors $g_1(\lambda),\dots,g_m(\lambda)$ of $g(\lambda)$.
\State Run $\mathtt{FindT}(g_1(\lambda),\dots,g_m(\lambda))$ computing the set $T:=\{k: p\nmid  k \text{ and } \Phi_k(\lambda) | g(\lambda)\}$ as well as $m_k$, the multiplicity of $\Phi_k(\lambda)$ in $g(\lambda)$, for each $k\in T$.
\State Compute $S:=\{kp^e: k\in T, 0\leq e\leq \log_p m_k\}$. 
\State \textbf{for each} $s \in S$ greater than $n^{2/3}$, set $f_s(x) \gets \gcd(f(x), x^{q^s}-x \bmod f(x))$, $f(x)\gets f(x)/f_s(x)$ and perform equal-degree factorization on $f_s(x)$.
\end{algorithmic}
\end{algorithm}
The algorithm begins by extracting all monic irreducible factors of degree at most $n^{2/3}$. After Line 1, $f(x)$ only has large (at least $n^{2/3}$) degree factors. Suppose $d_1,d_2,\ldots,d_m$ are the degrees of the (remaining) monic irreducible factors of $f(x)$. Then the minimal polynomial $g(\lambda) \in \F_q[\lambda]$ of the Frobenius acting on $\F_q[x]/(f(x))$ is
$$g(\lambda)=\mathrm{lcm}\left(\lambda^{d_1}-1,\dots, \lambda^{d_m}-1\right).$$
In particular, the cyclotomic polynomials $\Phi_{d_1}(\lambda),\dots, \Phi_{d_m}(\lambda)$ divide $g(\lambda)$ and the factorization of $g(\lambda)$ contains information about $d_1,d_2,\ldots,d_m$. We devise a novel procedure to infer $d_1,d_2,\ldots,d_m$ efficiently.\\ \\
On Line 2, $g(\lambda)$ is computed by invoking $\mathtt{FrobMinPoly}$. To infer $d_1,d_2,\ldots,d_m$, we seek the factorization of $g(\lambda)$. To this end, a key idea is to factor $g(\lambda)$ recursively on Line $3$ and obtain a list $g_1(\lambda),g_2(\lambda),\ldots,g_m(\lambda)$ of its monic irreducible factors. Since $f(x)$ is not irreducible at this point, $g(\lambda)$ has degree strictly less than $f(x)$ and the algorithm runs to completion.

Then we use a procedure $\mathtt{FindT}(g_1(\lambda),\dots,g_m(\lambda))$ to compute the set $T$ and integers $m_k$ as defined on Line $4$. This step is the most technical part of the algorithm, and we defer its description and analysis to the next subsection, where we prove the following theorem:
\begin{theorem}\label{thm_timeT}
$\mathtt{FindT}(g_1(\lambda),\dots,g_m(\lambda))$  can be implemented to run in $\widetilde{O}(n\log q)$ time.  
\end{theorem}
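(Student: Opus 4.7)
The plan is to reduce the task to recovering, for each irreducible factor $g_i(\lambda)$ of $g(\lambda)$ (provided by Line 3), the unique integer $k_i$ satisfying $g_i \mid \Phi_{k_i}(\lambda)$. The structural observation driving this reduction is that over $\F_q$ one has $g = \prod_{k \in T} \Phi_k^{m_k}$: each $\lambda^{d_j}-1 = (\lambda^{d'_j}-1)^{p^{e_j}}$ (with $d_j = d'_j p^{e_j}$, $p\nmid d'_j$) is already a product of whole cyclotomic polynomials, and hence so is their least common multiple. Consequently each $g_i$ divides a unique $\Phi_{k_i}$, and the multiplicity of $g_i$ in $g$ --- readable from the square-free factorization of Line 3 --- equals $m_{k_i}$. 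Recovering all $k_i$'s thus yields $T$ and every $m_k$.

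To narrow down each $k_i$ I would exploit (a) $k_i \le n$, since $g_i \mid \lambda^{d_j}-1$ for some $j$ forces $k_i \mid d_j \le \deg f = n$; and (b) $\mathrm{ord}_{k_i}(q) = \deg g_i$, because every irreducible factor of $\Phi_{k_i}$ over $\F_q$ has this common degree. After an $\widetilde O(n\log q)$ preprocessing pass that tabulates $\mathrm{ord}_k(q)$ for all $k \le n$ with $p \nmid k$ and buckets them into the candidate sets $K_d := \{k \le n : p \nmid k,\ \mathrm{ord}_k(q) = d\}$, each $k_i$ is forced to lie in $K_{\deg g_i}$.

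I would then group the $g_i$'s by degree, form $G_d := \prod_{\deg g_i = d} g_i$, and enumerate $k \in K_d$ in increasing order. At each $k$, compute $\gcd(\lambda^k - 1,\, G_d)$ by fast modular exponentiation; any irreducible factor of this gcd that has not already been extracted for a smaller candidate in $K_d$ must satisfy $k_i = k$, so peel it off from $G_d$ and record $k_i := k$. Assembling over all degrees $d$ produces $T$ together with $m_{k_i}$ equal to the (already known) multiplicity of $g_i$ in $g$.

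The hard part will be keeping the aggregate cost at $\widetilde O(n\log q)$ rather than the $\widetilde O(n^{3/2}\log q)$ a naive baby-step giant-step analysis would give, since $|K_d|$ can be much larger than $|T \cap K_d|$. The amortization should rest on three inequalities: $\sum_d \deg G_d \le \deg g \le n$ to control the aggregate polynomial size; $\sum_{k\in T}\phi(k) \le \deg g \le n$ to bound the number of successful extractions; and a charging argument that pays for unsuccessful candidates in $K_d$ via subsequent extractions from the same bucket. Together with Kedlaya--Umans-style fast polynomial arithmetic to keep each exponentiation at $\widetilde O(\deg G_d \log q)$ cost, this accounting should close out the desired bound.
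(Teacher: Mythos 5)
Your structural setup is sound: $g=\prod_{k\in T}\Phi_k^{m_k}$ does hold (each $\lambda^{d_j}-1=(\lambda^{d'_j}-1)^{p^{e_j}}$ is a product of cyclotomics, hence so is the lcm), each $g_i$ divides a unique $\Phi_{k_i}$ with $k_i\le n$, and $\mathrm{ord}_{k_i}(q)=\deg g_i$. The gap is exactly where you flag it: the amortization for the candidate enumeration does not exist, and no charging argument can supply it. Take $q\equiv 1 \pmod{\mathrm{lcm}(1,\dots,n)}$ (or just $q\equiv 1$ modulo many integers up to $n$), so that $\mathrm{ord}_k(q)=1$ for all admissible $k\le n$ and $K_1$ has $\Theta(n)$ elements, while $T=\{k\}$ for a single $k$ with $\phi(k)=\Theta(n)$, so $G_1=g$ has degree $\Theta(n)$. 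Your loop then performs $\Theta(n)$ unsuccessful $\gcd(\lambda^{k'}-1,G_1)$ computations, each costing $\widetilde{O}(n\log q)$, for a total of $\widetilde{O}(n^2\log q)$; there is only one extraction in the bucket to charge them to. The underlying difficulty is that $|K_d|$ is controlled by the divisor structure of $q^d-1$, not by $\deg G_d$ or by $|T\cap K_d|$, so "unsuccessful candidates" cannot be paid for by "subsequent extractions."

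The paper avoids enumeration over candidate $k$ entirely: it determines $k$ \emph{directly from a single factor} $g_0$ by computing inside $\F_q[\lambda]/(g_0(\lambda))$. A randomized procedure ($\mathtt{FindCyclotomic}$) recovers all conjugate irreducible factors of $\Phi_k$ --- hence $\phi(k)$ --- by repeatedly raising a root of unity to random powers $\ell\in[1,n]$ and taking minimal polynomials, which with high probability generates the quotient group $(\Z/k\Z)^\times/\langle \bar q\rangle$; then $\mathtt{Findk}$ recovers $k$ prime-by-prime, determining $v_\ell(k)$ for each prime $\ell$ with $(\ell-1)\mid\phi(k)$ via order computations on roots of unity (the number of such $\ell$ is $\phi(k)^{o(1)}$ by Wigert's divisor bound). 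Each round costs $\widetilde{O}(\phi(k)\log q)$ and removes total degree $\phi(k)$ from the pool, which is what makes the $\widetilde{O}(n\log q)$ total go through. If you want to salvage your route, you would need a way to identify $k_i$ from $g_i$ whose cost is proportional to $\phi(k_i)$ rather than to the number of integers $k'\le n$ sharing the order $\deg g_i$ --- which is essentially what the paper's $\mathtt{Findk}$ accomplishes.
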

Once $T$ is known, to compute $S$ on Line $5$ is straightforward. 
The following lemma shows that $S$ indeed contains $d_1,d_2,\ldots,d_m$.
\begin{lemma}$d_1,d_2,\ldots,d_m \in S$.
\end{lemma}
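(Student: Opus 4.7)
The plan is to show that every degree $d_i$ appears in $S$ by decomposing it into its $p$-part and prime-to-$p$ part, and then checking both factors are witnessed by the structure of $g(\lambda)$.

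Let $p$ be the characteristic of $\F_q$. For each $i$, write $d_i = k_i p^{e_i}$ with $\gcd(k_i,p)=1$. The key identity I will use is that in characteristic $p$, the Frobenius on the variable gives
\[
\lambda^{d_i}-1 = (\lambda^{k_i}-1)^{p^{e_i}} = \prod_{j \mid k_i} \Phi_j(\lambda)^{p^{e_i}}.
\]
Taking the least common multiple over $i$, the multiplicity of any cyclotomic factor $\Phi_k(\lambda)$ with $p \nmid k$ in $g(\lambda)$ is therefore
\[
m_k = \max\bigl\{\, p^{e_i} \;:\; k \mid k_i \,\bigr\},
\]
with the convention that $\Phi_k(\lambda) \nmid g(\lambda)$, i.e.\ $k \notin T$, when the set is empty.

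Next I would fix an index $i$ and verify the two conditions defining $S$ for the pair $(k_i, e_i)$. First, taking $k = k_i$ in the displayed factorization shows $\Phi_{k_i}(\lambda)$ divides $\lambda^{d_i}-1$, which in turn divides $g(\lambda)$; since $p \nmid k_i$ by construction, we conclude $k_i \in T$. Second, the formula for $m_{k_i}$ applied with $k = k_i$ yields $m_{k_i} \geq p^{e_i}$ (the index $i$ itself contributes), so $e_i \leq \log_p m_{k_i}$.

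Putting these together gives $d_i = k_i p^{e_i} \in S$, as required. There is no real obstacle here — the proof is essentially an unpacking of definitions once one has the identity $\lambda^{kp^e}-1 = (\lambda^k-1)^{p^e}$ in characteristic $p$; the only thing to be careful about is distinguishing the $p$-part of $d_i$ from its prime-to-$p$ part, which is exactly why $T$ is defined using $k$ coprime to $p$ and why $S$ reintroduces $p$-power factors up to the multiplicity bound.
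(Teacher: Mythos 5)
Your proof is correct and takes essentially the same route as the paper's: write $d_i = k_i p^{e_i}$ with $p \nmid k_i$, use $\lambda^{d_i}-1 = (\lambda^{k_i}-1)^{p^{e_i}} = \prod_{j \mid k_i}\Phi_j(\lambda)^{p^{e_i}}$ together with $(\lambda^{d_i}-1)\mid g(\lambda)$ to get $k_i \in T$ and $m_{k_i}\geq p^{e_i}$. The exact $\max$ formula for $m_k$ coming from the lcm is correct but unnecessary; the paper only uses the divisibility lower bound, as your final step in fact does too.
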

\begin{proof}
Consider an arbitrary $d \in \{d_1,d_2,\ldots,d_m\}$ and write it as $d=kp^e$ with $p\nmid k$. By definition $(\lambda^{d}-1) | g(\lambda)$. Since $\lambda^d-1=(\lambda^k-1)^{p^e}$ and $\lambda^k-1=\prod_{k_0|k} \Phi_{k_0}(\lambda)$, $\Phi_k(\lambda)$ is a factor of $g(\lambda)$ with multiplicity at least $p^e$. So $k \in T$ and $m_k\geq p^e$, implying $d=kp^e \in S$.
\end{proof}
To conclude, by Line $6$, all the irreducible factors of $f(x)$ are indeed output.

\begin{theorem}
Suppose the oracle $\mathtt{FrobMinPoly}$ runs in time $T(n,q)$ which is monotone in $n$ and $q$. Then  $\mathtt{Factor}$ factors a degree-$n$ polynomial in $ \widetilde{O}(T(n,q)+ n^{4/3}\log^{2} q)$ time.
\end{theorem}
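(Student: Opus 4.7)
The plan is to account for the cost of each line of $\mathtt{Factor}$, excluding the recursive invocation on line~3, and then unwind the recursion. Write $A(n,q) := \widetilde{O}(T(n,q) + n^{4/3}\log^2 q)$ and let $R(n,q)$ denote the runtime of $\mathtt{Factor}$ on a square-free input of degree $n$; the target is $R(n,q) = \widetilde{O}(A(n,q))$.

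First I would bound the non-recursive cost per call. Line~1 uses the Kaltofen--Shoup/Kedlaya--Umans baby-step/giant-step procedure with threshold $t = n^{2/3}$, yielding $\widetilde{O}(n\sqrt{t}\log^2 q) = \widetilde{O}(n^{4/3}\log^2 q)$. Line~2 is $T(n,q)$ by assumption. Inside line~3, the square-free decomposition of $g$ costs $\widetilde{O}(n\log q)$. Line~4 is $\widetilde{O}(n\log q)$ by Theorem~\ref{thm_timeT}, and line~5 is $\widetilde{O}(n)$. For line~6, since only $m \le n^{1/3}$ irreducible factors of $f$ survive line~1, at most $\widetilde{O}(n^{1/3})$ values $s \in S$ exceed $n^{2/3}$; for each such $s$ I would compute $x^{q^s}\bmod f$ by repeated Frobenius squaring via fast modular composition in $\widetilde{O}(n\log q)$, then do a gcd and equal-degree factorization. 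Summing yields $A(n,q)$.

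Next I would set up the recurrence. Line~3 calls $\mathtt{Factor}$ recursively on each square-free part $g_{[i]}$ of $g$, so $R(n,q) \le A(n,q) + \sum_i R(\deg g_{[i]}, q)$ with $\sum_i \deg g_{[i]} \le \deg g$. Because at least two irreducible factors of $f$ survive line~1 and they share the trivial divisor, the cyclotomic decomposition $g = \prod_{k \in K}\Phi_k^{m_k}$ gives $\deg g \le \sum_i d_i - (m-1) \le n-1$. To resolve the recurrence, I would exploit superlinearity of $A$ (dominated by $n^{4/3}$) and monotonicity of $T$: by convexity, $\sum_i A(n_i,q) \le A(\sum_i n_i, q) \le A(n,q)$, so the total $A$-work at any fixed depth of the recursion tree is $\widetilde{O}(A(n,q))$. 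Combined with a polylogarithmic bound on the tree's depth, this gives $R(n,q) = \widetilde{O}(A(n,q))$.

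The hard part I anticipate is the depth bound on the recursion tree. The naive $\deg g \le n-1$ only yields linear depth, which is far too weak to absorb into the $\widetilde{O}$. A sharper argument---exploiting the cyclotomic structure $g = \prod_{k \in K}\Phi_k^{m_k}$ with $K$ drawn from divisors of the surviving factor degrees $d_1,\ldots,d_m$, together with the line-1 threshold $(\deg g_{[i]})^{2/3}$ applied inside each recursive call---should show that in every recursive call either most of the input is stripped away by line~1 or the effective surviving-part size drops by a constant factor. Establishing polylogarithmic depth is the step I would focus on; once it is in hand, the per-line estimates above assemble mechanically into the claimed $\widetilde{O}(T(n,q) + n^{4/3}\log^2 q)$ bound.
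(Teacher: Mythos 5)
Your cost accounting per call matches the paper's, and your overall plan (per-level work $\widetilde{O}(T(n,q)+n^{4/3}\log^2 q)$ times a polylogarithmic depth bound) is exactly the right skeleton. But the one step you explicitly defer --- ``establishing polylogarithmic depth is the step I would focus on'' --- is precisely the technical heart of this theorem, and the direction you gesture at (either line~1 strips away most of the input or the surviving part shrinks by a constant factor) is not how it actually works and is not obviously true: line~1 need not remove anything, and $\deg g$ need not drop by a constant factor in one step (e.g.\ if $f$ has a single large prime factor degree $d$, then $\deg g = d$). What the paper tracks instead is $d_{\max}$, the largest irreducible factor degree, and proves it halves every \emph{two} levels of recursion. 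The argument: each irreducible factor $g_0(\lambda)$ of $g(\lambda)$ divides $\lambda^k-1=\prod_{k_0\mid k}\Phi_{k_0}(\lambda)$ for some surviving factor degree $k$ of $f$. If $g_0\mid\Phi_{k_0}$ for a proper divisor $k_0$ of $k$, then $\deg g_0\le k_0\le k/2$; if $g_0$ is a proper irreducible factor of $\Phi_k$, then $\deg g_0\le\phi(k)/2\le k/2$; the only remaining case is $g_0=\Phi_k$, where either $k$ is even and $\phi(k)\le k/2$, or $k$ is odd and $\phi(k)$ is \emph{even}, so the same case analysis applied one level deeper forces halving there. This is the lemma your proposal is missing, and without it the recursion depth --- hence the whole bound --- is unproven.

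Two smaller points. First, your count of the large elements of $S$ in line~6 is justified incorrectly: $S$ is built from \emph{all} $k$ with $\Phi_k\mid g$ (including divisors of the $d_i$) together with $p$-power multiples, so it is not in bijection with the $m\le n^{1/3}$ surviving factors. The correct route (the paper's) is to bound $\sum_{s\in S}s=\widetilde{O}(n)$ using $k/\phi(k)=O(\log\log k)$ and $\sum_k m_k\phi(k)\le n$, whence at most $\widetilde{O}(n^{1/3})$ elements exceed $n^{2/3}$. Second, your appeal to ``convexity'' to get $\sum_i T(n_i,q)\le T(\sum_i n_i,q)$ does not follow from the stated hypothesis, which is only monotonicity of $T$; you should instead bound the oracle cost per level by (number of calls at that level) $\cdot\, T(n,q)$ and control the branching, or state superadditivity as an assumption. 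These are repairable, but the depth bound is a genuine gap.
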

\begin{proof}
We first analyze the running time of each step except the recursive call.  Line $1$ can be implemented in $\widetilde{O}(n^{4/3} \log^{2} q)$ time using the baby-step-giant-step strategy \cite{ks,ku}. The oracle $\mathtt{FrobMinPoly}$ on Line 2 runs in time $T(n,q)$. The set $T$ on Line 4 could be found in time $\widetilde{O}(n \log q)$ by Theorem \ref{thm_timeT}. Since $\prod_{k\in T}\Phi_k(\lambda)^{m_k}$ divides $g(\lambda)$, we have $\sum_{k\in T} m_k \phi(k)\leq \deg(g(\lambda))\leq n$. Hence $|T|\leq n$ and $m_k\leq n$ for all $k \in T$, implying $S$ on Line 5 could be computed in time $\widetilde{O}(n)$.
Further,
$$
\sum_{s\in S} s\leq \sum_{k\in T, 0\leq e\leq \log_p m_k} k p^e\leq \log n \sum_{k\in T} k m_k \leq O(\log\log n) \cdot \log n \sum_{k\in T} m_k\phi(k)=\widetilde{O}(n)
$$
where we use $k/\phi(k)=O(\log\log k)$ \cite{rs} and $\sum_{k\in T} m_k\phi(k)\leq n$. 
Hence the number of $s\in S$ greater than $n^{2/3}$ is at most $(\sum_{s\in S} s)/n^{2/3}=\widetilde{O}(n^{1/3})$. For each $s\in S$,  Computing $f_s(x)$  takes $\widetilde{O}(n\log^{2} q)$ time for each $s\in S$ \cite{ku} and hence $\widetilde{O}(n^{4/3}\log^{2} q)$ time in total.
Equal degree factorization on Line 6 takes $\widetilde{O}(n \log^{2} q)$ time in total. \\ \\
Let $d_\mathrm{max}(f(x))$ denote the maximal degree of the irreducible factors of $f(x)$. We claim that $d_\mathrm{max}(f(x))$ shrinks by at least a factor of two every two recursive calls.  It implies that the recursive tree has depth no more than $O(\log n)$, so the total running time is bounded by $O(\log n)\cdot (T(n,q)+ \widetilde{O}(n^{4/3}\log^{2} q))= \widetilde{O}(T(n,q)+ n^{4/3}\log^{2} q)$, as desired.\\ \\
Consider an irreducible factor $g_0(\lambda)$ of $g(\lambda)$. We know $g_0(\lambda)$ divides $\lambda^k-1=\prod_{k_0|k} \Phi_{k_0}(\lambda)$ for a positive integer $k$ corresponding to some degree $k$ irreducible factor $f_0(x)$ of $f(x)$. If $g_0(\lambda)$ divides $\Phi_{k_0}(\lambda)$ for some proper divisor $k_0$ of $k$,  we have 
$\deg(g_0(\lambda))\leq \phi(k_0)\leq k_0\leq k/2$. Likewise, if $g_0(\lambda)$ is a proper irreducible factor of $\Phi_k(\lambda)$, we have $\deg(g_0(\lambda))\leq \phi(k)/2\leq k/2$ as well. So assume $g_0(\lambda)=\Phi_k(\lambda)$. Suppose $k=\prod_\ell \ell^{e_\ell}$, $\ell$ running over prime divisors of $k$.
Then $\phi(k)=\prod_\ell (\ell-1)\ell^{e_\ell-1}$. If $k$ is even, we have $e_2\geq 1$ implying $\deg(g_0(\lambda))=\phi(k)\leq k/2$ (since for $\ell=2$, $(\ell-1)\ell^{e_\ell-1}=\ell^{e_\ell}/2$). If $k$ is odd,
$\deg(g_0(\lambda))=\phi(k)=\prod_\ell (\ell-1)\ell^{e_\ell-1}$ is even. The argument above applied to $g_0(\lambda)$ and $g(\lambda)$ in place of $f_0(x)$ and $f(x)$ shows that the degree shrinks by at least a factor of two in the next recursive call. The claim follows.
\end{proof}
\begin{remark} One may easily check that the same algorithm and analysis also work if the polynomial $g(\lambda)$ computed by the oracle is the {\em characteristic polynomial} of the Frobenius endomorphism instead of the minimal polynomial. The only difference is that $g(\lambda)$ is the product of $\lambda^{d_1}-1,\dots, \lambda^{d_m}-1$ rather than their lcm.
\end{remark}
\subsection{Computing the Set $T$}
We next devise a nearly linear time procedure to implement $\mathtt{FindT}$. It relies on solutions to the following two problems: (1) finding all irreducible factors of $\Phi_k(\lambda)$  over $\F_q$ from a single irreducible factor $g_0(\lambda)$ and (2) finding the corresponding integer $k$. We deal with these two problems individually before describing $\mathtt{FindT}$. 
\subsubsection{Finding the irreducible factors of $\Phi_k(\lambda)$}
Let $k\in [1, n]$  be an integer coprime to $p$. Our goal is to find all the irreducible factors of $\Phi_k(\lambda)$ over $\F_q$ from a single irreducible factor $g_0(\lambda)|\Phi_k(\lambda)$. To achieve it, we need to know how $\Phi_k(\lambda)$ factorizes over $\F_q$.\\
\paragraph{Factorization of $\Phi_k(\lambda)$ over $\F_q$:} 
As $k$ is coprime to $p$, there are $\phi(k)$ distinct primitive $k$th roots of unity in $\overline{\F}_q$ which are exactly the roots of  $\Phi_k(\lambda)$. Denote this set of roots by $\mu_k$.
Let $G$ be the abelian group $(\Z/k\Z)^\times$ of order $\phi(k)$.
For $d\in\Z$, we write $\bar{d}$ for the image of $d$ in $\Z/k\Z$.
The group $G$ acts on $\mu_k$ such that $\bar{d}\in G$ sends  any $\theta\in \mu_k$ to $\theta^d$.
This is a regular action, meaning that for fixed $\theta\in\mu_k$, the map $\bar{d}\mapsto \theta^d$ is a bijection between $G$ and $\mu_k$. 
As $p$ is coprime to $k$, we have $\bar{q}\in G$.
Let $G_0=\langle \bar{q}\rangle\subseteq G$ and $s=[G:G_0]$.
Restrict the $G$-action on $\mu_k$ to a $G_0$-action. Then $\mu_k$ is partitioned into $s$ distinct $G_0$-orbits represented by $\theta_1,\dots,\theta_s\in \mu_k$.
It is well-known that the factorization of $\Phi_k(\lambda)$ over $\F_q$ is then determined in the following way:
\begin{lemma}\label{lem_facphi}
Under the notations above,  $\Phi_k(\lambda)$ has $s$ irreducible factors $g_1(\lambda),\dots,g_s(\lambda)$  over $\F_q$ corresponding to the $G_0$-orbits $G_0\theta_1, \dots, G_0\theta_s$ of $\mu_k$  in the sense that the set of roots of $g_i(\lambda)$ is exactly $G_0\theta_i$.
\end{lemma}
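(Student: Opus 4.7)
The plan is to show that the irreducible factorization of $\Phi_k(\lambda)$ over $\F_q$ is governed by the Frobenius orbits on its roots, and then to observe that the Frobenius orbits coincide with the $G_0$-orbits because of the way $G$ acts on $\mu_k$.

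First, I would recall the standard fact that $\Phi_k(\lambda)$ splits over $\overline{\F}_q$ as $\prod_{\theta\in\mu_k}(\lambda-\theta)$; this uses precisely the hypothesis that $p\nmid k$, so $\Phi_k(\lambda)$ is separable over $\F_q$ and its roots are exactly the primitive $k$th roots of unity. Next, I would invoke the standard description of irreducible factors of a separable polynomial over a finite field: for any $\theta\in\mu_k$, the minimal polynomial of $\theta$ over $\F_q$ equals $\prod_{\eta\in\mathrm{Orb}(\theta)}(\lambda-\eta)$, where $\mathrm{Orb}(\theta)=\{\theta,\theta^q,\theta^{q^2},\dots\}$ is the orbit of $\theta$ under the $q$th-power Frobenius $\mathrm{Frob}_q:\overline{\F}_q\to\overline{\F}_q$. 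Hence, partitioning $\mu_k$ into Frobenius orbits yields a one-to-one correspondence between the orbits and the monic irreducible factors of $\Phi_k(\lambda)$ over $\F_q$, with the roots of each factor equal to the corresponding orbit.

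It then remains to identify the Frobenius orbits with the $G_0$-orbits. For any $\theta\in\mu_k$, we have $\mathrm{Frob}_q(\theta)=\theta^q$, and by the definition of the $G$-action on $\mu_k$ this is exactly the action of $\bar{q}\in G$ on $\theta$. Iterating, $\mathrm{Frob}_q^{\,i}(\theta)=\theta^{q^i}=\bar{q}^{\,i}\cdot\theta$, so the Frobenius orbit of $\theta$ equals $\langle\bar{q}\rangle\theta=G_0\theta$. Therefore the Frobenius orbits on $\mu_k$ are exactly $G_0\theta_1,\dots,G_0\theta_s$, and combining with the previous paragraph gives $\Phi_k(\lambda)=\prod_{i=1}^s g_i(\lambda)$ with the roots of $g_i(\lambda)$ equal to $G_0\theta_i$, as claimed.

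The main obstacle is really conceptual rather than technical: one must carefully separate the roles of the ambient group $G=(\Z/k\Z)^\times$ (which indexes all of $\mu_k$ via the regular action) from the subgroup $G_0=\langle\bar{q}\rangle$ (whose orbits cut $\mu_k$ into Frobenius orbits and thus into irreducible factors). Once these identifications are made explicit, everything reduces to standard Galois theory of finite fields, and no calculation is required.
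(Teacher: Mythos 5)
Your proposal is correct and follows essentially the same route as the paper's proof: both reduce the statement to the standard fact that the irreducible factors of the separable polynomial $\Phi_k(\lambda)$ over $\F_q$ have root sets equal to Frobenius orbits, and then identify the Frobenius orbit of $\theta$ with $G_0\theta=\langle\bar q\rangle\theta$. Your write-up is somewhat more explicit about separability and the orbit identification, but the underlying argument is the same.
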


\begin{proof}
Let $g(\lambda)$ be an irreducible factor of $\Phi_k(\lambda)$ over $\F_q$ and $\theta\in\mu_k$ be a root of $g(\lambda)$. Then $\F_q[\theta]$ is Galois over $\F_q$ with the Galois group generated by the Frobenius map $a\mapsto a^q$. So $a\in \F_q[\theta]$ is a root of $g(\lambda)$ if and only if $a^q$ is a root of $g(\lambda)$. Therefore $G_0\theta$ is the set of roots of $g(\lambda)$ and the lemma follows. 
\end{proof}

From now on we fix a root $\theta\in\mu_k$ of the given irreducible factor $g_0$ of $\Phi_k$.
For any subgroup $H\subseteq G$ containing $G_0$, the $G$-action on $\mu_k$ restricts to an $H$-action. The $H$-orbit $H\theta$ is partitioned into a disjoint union of $G_0$-orbits and hence corresponds to a subset $L$ of irreducible factors of $\Phi_k(\lambda)$ by Lemma \ref{lem_facphi}.
Note that  $L$ also determines $H$: $h\in G$ lies in $H$ if and only if the minimal polynomial of $h \theta$ over $\F_q$ is in $L$.
We say $L$ is {\em associated with} the subgroup $H$.\\

We use the following procedure $\mathtt{FindOrder}(\ell, L)$ to find the order of $H\bar{\ell}$ in $G/H$:

\begin{algorithm}[H]
\caption{$\mathtt{FindOrder}(\ell, L)$}
\begin{algorithmic}[1]
\INPUT Integer $\ell\in [1,n]$ and  $L$ associated with some subgroup $H$ containing $G_0$
\OUTPUT The order of $H\bar{\ell}$ in $G/H$, or zero if $\bar{\ell}\not\in G=(\Z/k\Z)^\times$
\State Pick arbitrary $f_0(\lambda)\in L$
\State $e\gets 0$,  $r_0\gets \lambda \bmod f_0(\lambda) \in \F_q[\lambda]/(f_0(\lambda))$ 
\Repeat
\State $e\gets e+1$
\State $r_e\gets r_{e-1}^\ell$ and let $f_e(\lambda)$ be the minimal polynomial of $r_e$ over $\F_q$
\Until{$f_e(\lambda)\in L$ or $f_e(\lambda)=f_{e'}(\lambda)$ for some $0\leq e'<e$}
\State\textbf{if} $f_e(\lambda)\in L$ \textbf{then} \Return $e$ \textbf{else} \Return $0$
\end{algorithmic}
\end{algorithm}

\begin{lemma} \label{lem_findorder}
There exists a procedure   $\mathtt{FindOrder}(\ell, L)$ that takes an integer $\ell$ and the set $L$ associated with $H$, and returns the following result:  if  $\bar{\ell}\in G=(\Z/k\Z)^\times$, it returns the order of $H\bar{\ell}$ in $G/H$, i.e. the smallest $e>1$ for which $\bar{\ell}^e \in H$.  Otherwise it returns zero. Moreover $\mathtt{FindOrder}(\ell, L)$ could be implemented in time $\widetilde{O}(\phi(k)\log q)$.
\end{lemma}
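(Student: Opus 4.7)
The plan is to set up a group-theoretic identification that makes the loop transparent, verify correctness by splitting on whether $\bar\ell \in G$, and finally analyze the running time using fast polynomial arithmetic. I would begin by setting $R_0 := \F_q[\lambda]/(f_0(\lambda))$ and $\theta := r_0 = \lambda \bmod f_0 \in R_0$. Since $f_0 \in L$ divides $\Phi_k(\lambda)$, $\theta$ is a primitive $k$-th root of unity and $R_0 \cong \F_{q^{|G_0|}}$. I take this $\theta$ as the base point of the bijection $\bar d \leftrightarrow \theta^d$ between $G = (\Z/k\Z)^\times$ and $\mu_k$. By Lemma \ref{lem_facphi} and the definition of ``$L$ associated with $H$'', the union of root sets of polynomials in $L$ is the $H$-orbit $H\theta$, so the minimal polynomial of $\theta^j$ lies in $L$ iff $\bar j \in H$. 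A straightforward induction yields $r_e = \theta^{\ell^e}$, transforming the termination condition into $f_e \in L \iff \bar\ell^e \in H$. In the case $\bar\ell \in G$, the smallest $e \geq 1$ with $\bar\ell^e \in H$ is the order of $H\bar\ell$ in $G/H$, which divides $[G:H] \leq [G:G_0] = \phi(k)/|G_0|$, and the algorithm returns this order.

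Next I would treat the case $\bar\ell \notin G$, i.e. $\gcd(\ell,k)>1$. Here every $r_e$ with $e \geq 1$ is a primitive $k_e$-th root of unity for some proper divisor $k_e$ of $k$, so $f_e$ divides $\Phi_{k_e}$, shares no factor with $\Phi_k$, and hence is not in $L$. The descending chain $k_0 > k_1 > \cdots$ stabilizes at $k^* := k/\gcd(k,\ell^\infty)$ within $O(\log k)$ steps; after stabilization $r \mapsto r^\ell$ is a bijection on $\mu_{k^*}$, making $(r_e)$ and $(f_e)$ eventually periodic. The cycle length of $(f_e)$ equals the order of $\bar\ell\, G_0^*$ in $(\Z/k^*\Z)^\times/G_0^*$, where $G_0^* = \langle \bar q\rangle \subseteq (\Z/k^*\Z)^\times$, and I would bound this by $\phi(k^*)/|G_0^*| \leq \phi(k)/|G_0|$ via a short CRT computation over the coprime factorization $k = k^* \cdot m$. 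Hence the ``$f_e = f_{e'}$'' termination fires within $O(\log k) + \phi(k)/|G_0|$ iterations, and the algorithm returns $0$ as required.

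For the running time, the loop takes $\widetilde{O}(\phi(k)/|G_0|)$ iterations. Each power step $r_e \gets r_{e-1}^\ell$ inside $R_0$ costs $\widetilde{O}(|G_0|\log q)$ via repeated squaring. I would precompute once the product $P_L(\lambda) := \prod_{g \in L} g(\lambda) \in \F_q[\lambda]$ by a subproduct tree in $\widetilde{O}(\phi(k))$ time and $\lambda^q \bmod f_0$ in $\widetilde{O}(|G_0|\log q)$; then $f_e \in L$ reduces to $P_L(r_e) \stackrel{?}{=} 0$ in $R_0$, and $f_e = f_{e'}$ is handled analogously using a running product of the $f_{e'}$'s obtained from the Frobenius orbits of the $r_{e'}$'s via the precomputed $\lambda^q \bmod f_0$. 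The main technical hurdle lies here: $\deg P_L$ may be as large as $\phi(k) \gg |G_0|$, so direct Horner evaluation of $P_L(r_e)\bmod f_0$ would cost $\widetilde{O}(\phi(k) |G_0|)$ per iteration --- a factor of $|G_0|$ too slow. Obtaining the claimed bound genuinely requires Kedlaya--Umans fast modular composition between polynomials of mismatched degrees, exploiting that $P_L$ is a product of $|L|$ polynomials of degree exactly $|G_0|$ matching the modulus $f_0$, so that the giant-step powers $r_e, r_e^2,\ldots,r_e^{|G_0|} \bmod f_0$ can be reused across the $|L|$ blocks; this brings the per-iteration amortized cost down to $\widetilde{O}(|G_0|\log q)$, and multiplying by the iteration bound gives the advertised $\widetilde{O}(\phi(k)\log q)$.
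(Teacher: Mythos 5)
Your correctness argument tracks the paper's proof closely and is essentially sound: the identification of the loop with the walk $H\theta \mapsto H\bar{\ell}^e\theta$, and the stabilization-plus-periodicity argument in the case $\bar{\ell}\notin G$ (including the bound $[\overline{G}:\overline{G}_0]\le [G:G_0]$ via surjectivity of the reduction map), are exactly what the paper does. One small omission: in the case $\bar{\ell}\in G$ you only analyze the first disjunct of the termination condition; you should also check that the second disjunct $f_e=f_{e'}$ cannot fire \emph{prematurely} and cause a wrong return of $0$. (It cannot: $f_e=f_{e'}$ forces $\bar{\ell}^{e-e'}\in G_0\subseteq H$, so by minimality $e'=0$ and then $f_e\in L$ as well; the paper makes this explicit.)

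The running-time analysis, however, has a genuine flaw. You test $f_e\in L$ by evaluating $P_L(\lambda)=\prod_{g\in L}g(\lambda)$ at $r_e$, and claim an amortized cost of $\widetilde{O}(|G_0|\log q)$ per iteration via baby-step/giant-step reuse. This cannot work: $\deg P_L = |H| = |L|\cdot|G_0|$, and any evaluation of a degree-$|H|$ polynomial with arbitrary coefficients must read all $|H|$ coefficients, so each test costs $\Omega(|H|)$ regardless of which powers of $r_e$ are precomputed; the coefficients must be recombined with the fresh powers of $r_e$ at every iteration, so nothing amortizes away. In the case $\bar{\ell}\in G$ you are saved by the accident that the iteration count is at most $[G:H]$, so the product $[G:H]\cdot|H|=\phi(k)$ is fine. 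But in the case $\bar{\ell}\notin G$ the loop runs up to $\log k + [G:G_0]$ times while $|H|$ can be as large as $\phi(k)$ (e.g.\ $H=G$ with $G_0$ small), giving total cost $\widetilde{\Omega}(\phi(k)\cdot[G:G_0]\log q)$, which exceeds the claimed bound by a factor of $[G:G_0]$. The same objection applies to your ``running product of the $f_{e'}$'s'' for the $f_e=f_{e'}$ test. The paper avoids all of this: it computes $f_e$ explicitly via Shoup's minimal-polynomial algorithm with Kedlaya--Umans (cost $\widetilde{O}(|G_0|\log q)$, which you already pay anyway), and then performs both membership tests by looking up the coefficient vector of $f_e$ in a dictionary over $L$ and over the previously seen $f_{e'}$, at cost $O(|G_0|\log q(\log N+\log|L|))$ per lookup; summed over $\log k+[G:G_0]$ iterations this gives $\widetilde{O}(\phi(k)\log q)$. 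Replacing your polynomial-evaluation tests with dictionary lookups repairs the argument.
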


\begin{proof}
The procedure $\mathtt{FindOrder}(\ell, L)$ first picks a polynomial $f_0(\lambda)\in L$ and one of its roots $r_0\in H\theta$.
First assume $\bar{\ell}\in G$. Then the  procedure finds  the smallest $e$ for which
$f_e(\lambda)\in L$ or $f_e(\lambda)=f_{e'}(\lambda)$ for some $0\leq e'<e$, where $f_e(\lambda)$ is the minimal polynomial of $\bar{\ell}^e r_0=r_0^{\ell^e}$.
 The former condition $f_e(\lambda)\in L$ is equivalent to $G_0 \bar{\ell}^e r_0\subseteq H \theta=H r_0$, or equivalently $\bar{\ell}^e \in H$.
 The latter condition $f_e(\lambda)=f_{e'}(\lambda)$ for some $0\leq e'<e$ is equivalent to $G_0\bar{\ell}^e r_0= G_0\bar{\ell}^{e'} r_0$, or equivalently $\bar{\ell}^{e-e'}\in G_0$. Note that if the latter condition is met, by minimality of $e$ we must have $e'=0$ and hence $\bar{\ell}\in G_0\subseteq H$. So the former condition subsumes the latter and the desired $e$ is picked.\\ \\
Now assume $\bar{\ell}\not\in G$, then $\ell_0:=\mathrm{gcd}(\ell,k)>0$. Then all $r_e$ generated in  $\mathtt{FindOrder}(\ell, L)$ are $(k/\ell_0)$th roots of unity for $e>0$, and hence $f_e(\lambda)\not\in L$ for $e>0$. Let $k_1$ be the largest divisor of $k$ coprime to $\ell$ and $k_2=k/k_1$. Then $k_1$ consists of prime divisors of $k$ not appeared in the factorization of $\ell_0$ whereas $k_2$ consists of those appeared.
For any prime number $t|k_2$, we have $v_t(\ell)\geq v_t(\ell_0)\geq 1$ and $v_t(k_2)\leq \log_t k_2\leq \log k_2$, where $v_t(n)$ denotes the integer $u\geq 0$ such that $t^u|n$ and $t^{u+1}\nmid n$. So for $e\geq \log k_2$, we have $v_t(\ell^e)\geq v_t(k_2)$ for all prime number $t$, and hence $k_2|\ell^e$. On the other hand, $k_1$ is coprime to $\ell$. So $r_e=r_0^{\ell^e}$ is a primitive $k_1$th root of unity for $e\geq \log k_2$.
Let $\overline{G}_0$ be the subgroup of $\overline{G}:=(\Z/k_1\Z)^\times$ generated by $q \bmod k_1$.
We apply Lemma \ref{lem_facphi} on $\overline{G}$ and $\overline{G}_0$ instead of $G$ and $G_0$.
Let $e'=[\overline{G}: \overline{G}_0]\leq [G:G_0]$. Then for any $e\geq\log k$ we have $\overline{G}_0r_{e+e'}=\overline{G}_0\bar{\ell}^{e'}r_e=\overline{G}_0r_e$ and hence $f_{e+e'}(\lambda)=f_e(\lambda)$.
So the loop in $\mathtt{FindOrder}(\ell, L)$ is executed at most $\log k +e'\leq \log k+[G:G_0]$ times. And as $f_e(\lambda)\not\in L$ for $e>0$, it returns zero.\\ \\
As all $\F_q(r_e)$ are subfields of $\F_q(r_0)$, the degrees of all $f_e(\lambda)$ are bounded by $\mathrm{deg}(f_0(\lambda))=|G_0|$. Line 5 of $\mathtt{FindOrder}(\ell, L)$ could be computed in time $\widetilde{O}(|G_0|\log q)$ using  the Kedlaya-Umans \cite{ku} implementation of Shoup's algorithm \cite{sho}.
The condition on Line 6 could be checked in time $O(|G_0|\log q(\log N + \log |L|))$ if we store $L$ and the list of $f_i(\lambda)$ using a data structure supporting fast search and insertion, where $N$ is number of times that the loop is executed.
Here $|L|=[H:G_0]\leq k$. If $\bar{\ell}\in G$, the loop is executed $e=[H\langle\bar{\ell}\rangle: H]\leq [G:G_0]$ times, whereas if  $\bar{\ell}\not\in G$, it is executed no more than $\log k+ [G:G_0]$ times.
So the total running time is bounded by $(\log k+ [G:G_0])\cdot (\widetilde{O}(|G_0|\log q)+O(|G_0|\log q\log k ))=\widetilde{O}(\phi(k)\log q)$.
\end{proof}

We use a randomized procedure $\mathtt{FindCyclotomic}(g_0(\lambda), n)$ to find all irreducible factors of $\Phi_k(\lambda)$ over $\F_q$. Here  $g_0(\lambda)$ is one irreducible factor of $\Phi_k(\lambda)$ and $n$ is the degree of the polynomial $f(x)$.\footnote{The argument $n$ is only used on Line 2 and 3 to control the number of repetitions and the range of $\ell$, which is related to the error probability.}

\begin{algorithm}[htb]
\caption{$\mathtt{FindCyclotomic}(g_0(\lambda), n)$}\label{alg_cyclo}
\begin{algorithmic}[1]
\INPUT Irreducible factor $g_0(\lambda)$ of $\Phi_k(\lambda)$ over $\F_q$ and  degree $n$ of $f(x)$
\OUTPUT The list of irreducible factors of  $\Phi_k(\lambda)$ over $\F_q$
\State $L\gets\{g_0(\lambda)\}$
\For{$t$ \textbf{from} $1$ \textbf{to} $N=\lfloor c \log n\log\log n\rfloor$}\Comment{{\em $c>0$ is a large enough constant}}
\State Pick an integer $\ell\in [1,n]$ at random
\State $e\gets\mathtt{FindOrder}(\ell, L)$
\For {\textbf{each} $h(\lambda)\in L$}
\State $r_0\gets \lambda \bmod h(\lambda) \in \F_q[\lambda]/(h(\lambda))$, $r_i\gets r_{i-1}^\ell$ for $i=1,\dots, e-1$
\State Let $f_i(\lambda)$ be the minimal polynomial of $r_i$ over $\F_q$ for $i=1,\dots, e-1$
\State Add $f_1(\lambda),\dots,f_{e-1}(\lambda)$ to $L$
\EndFor
\EndFor
\State \Return $L$
\end{algorithmic}
\end{algorithm}

The procedure  $\mathtt{FindCyclotomic}(g_0(\lambda), n)$ maintains a subset $L$ of irreducible factors of $\Phi_k(\lambda)$ associated with some subgroup of $G$ containing $G_0$.
Initially $L=\{g_0(\lambda)\}$, associated with $H_0:=G_0$. We claim:
\begin{lemma}\label{lem_gen}
Suppose $L$ is associated with $H_{i-1}$ at the beginning $i$th execution of the outer loop of $\mathtt{FindCyclotomic}(g_0(\lambda), n)$. Then at the end of the  $i$th execution, the set $L$ is associated with  a subgroup $H_i\supseteq H_{i-1}$.
Moreover, $H_i=H_{i-1}$ if $\bar{\ell}\not\in G$ in the $i$th execution of the outer loop. Otherwise $H_i=H_{i-1}\langle\bar{\ell}\rangle$.
\end{lemma}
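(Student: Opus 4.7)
The plan is to analyze one iteration of the outer loop of Algorithm~\ref{alg_cyclo}, exploiting that $G=(\Z/k\Z)^\times$ is abelian so that multiplication by any $\bar{m}\in G$ commutes with the $G_0$-action on $\mu_k$. I fix throughout the root $\theta\in\mu_k$ of $g_0(\lambda)$ used in the text to define the association between subsets of irreducible factors of $\Phi_k(\lambda)$ and subgroups of $G\supseteq G_0$.

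First I would dispose of the easy case $\bar{\ell}\notin G$: by Lemma~\ref{lem_findorder}, $\mathtt{FindOrder}$ returns $e=0$, the inner loop over $h(\lambda)\in L$ adds nothing, and so $H_i=H_{i-1}$, matching the claim. Otherwise Lemma~\ref{lem_findorder} says $e\geq 1$ equals the order of $H_{i-1}\bar{\ell}$ in $G/H_{i-1}$, so the cosets $H_{i-1},H_{i-1}\bar{\ell},\dots,H_{i-1}\bar{\ell}^{e-1}$ are pairwise distinct and partition $H_i:=H_{i-1}\langle\bar{\ell}\rangle$. I would set this $H_i$ as the target and aim to show $L$ is associated with it at the end of the iteration.

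For each $h(\lambda)\in L$ at the start of the iteration, the element $r_0=\lambda\bmod h(\lambda)$ is a root of $h$ and therefore lies in $H_{i-1}\theta$, so $r_j=\bar{\ell}^j r_0\in H_{i-1}\bar{\ell}^j\theta$. By Lemma~\ref{lem_facphi}, the minimal polynomial $f_j(\lambda)$ of $r_j$ corresponds to the $G_0$-orbit $G_0 r_j=\bar{\ell}^j(G_0 r_0)$. The key observation is that multiplication by $\bar{\ell}^j$ is a $G_0$-equivariant bijection of $\mu_k$, so it induces a bijection between $G_0$-orbits in $H_{i-1}\theta$ and $G_0$-orbits in $H_{i-1}\bar{\ell}^j\theta$. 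Since $L$ starts associated with $H_{i-1}$, the orbits $G_0 r_0$ enumerate every $G_0$-orbit in $H_{i-1}\theta$ exactly once as $h$ ranges over $L$; hence for each $j\in\{1,\dots,e-1\}$ the polynomials $f_j$ collectively enumerate every irreducible factor of $\Phi_k(\lambda)$ whose roots lie in $H_{i-1}\bar{\ell}^j\theta$.

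Combining these newly added polynomials with the original $L$ (the $j=0$ stratum), the augmented $L$ equals the set of irreducible factors whose roots partition $\bigsqcup_{j=0}^{e-1}H_{i-1}\bar{\ell}^j\theta=H_i\theta$, which is exactly the set associated with $H_i\supseteq H_{i-1}$. The edge case $e=1$, equivalently $\bar{\ell}\in H_{i-1}$, is consistent since no polynomials are added and $H_i=H_{i-1}$. The main subtlety — and where I would be most careful — is verifying both containments: that nothing outside $H_i\theta$ is added (immediate from $r_j\in H_i\theta$) and that every $G_0$-orbit of $H_i\theta$ is hit, both of which reduce to the $G_0$-equivariant bijection above combined with the inductive hypothesis that $L$ initially covers $H_{i-1}\theta$ exactly.
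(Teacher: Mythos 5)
Your proposal is correct and follows essentially the same argument as the paper's proof: reduce to the case $\bar{\ell}\in G$, use Lemma~\ref{lem_findorder} to identify $e=[H_i:H_{i-1}]$, and observe that translating the $G_0$-orbits covering $H_{i-1}\theta$ by $\bar{\ell}^j$ for $j=0,\dots,e-1$ exactly covers $H_i\theta$. The only cosmetic difference is that the paper verifies distinctness of the resulting orbits by counting ($me=[H_i:G_0]$), whereas you use the coset partition together with the $G_0$-equivariance of multiplication by $\bar{\ell}^j$; both are valid.
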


\begin{proof}
If $\ell\not\in G$  in the $i$th execution of the outer loop, then $e$ is set to zero by Lemma \ref{lem_findorder} and the claim is trivial. So assume $\ell\in G$ and let $H=H_{i-1}\langle\bar{\ell}\rangle$. Then $e$ is the order of $H_{i-1}\bar{\ell}$ in $H/H_{i-1}$ by Lemma \ref{lem_findorder}, or $[H: H_{i-1}]$.
Suppose the irreducible factors in $L$ at the beginning of the $i$th execution correspond to distinct $G_0$-orbits  $G_0\theta_1,\dots, G_0\theta_m$ whose union is the $H_{i-1}$-orbit $H_{i-1}\theta$, $m=[H_{i-1}:G_0]$.
The inner loop enumerates $G_0 \theta_j$, and for each of them, adds the irreducible factor corresponding to $G_0 \theta_j^{\ell^s}$ to $L$, $s=1,\dots,e-1$. Note that the union of these $G_0$-orbits $G_0 \theta_j^{\ell^s}=G_0\bar{\ell}^s \theta_j=\bar{\ell}^s G_0 \theta_j$ where $1\leq j\leq m$, $0\leq s\leq e-1$ equals the union of $H_{i-1}$-orbits $\bar{\ell}^s H_{i-1} \theta_j$, which equals the $H$-orbit $H \theta$. And these $G_0$-orbits are all distinct since the number of them is $me=[H:G_0]$. So $L$ is associated with $H$  at the end  of the $i$th execution of the outer loop.
\end{proof}

\begin{lemma}\label{lem_findcyc}
The procedure $\mathtt{FindCyclotomic}(g_0(\lambda), n)$ returns a set $L$ associated with $H_N\subseteq G$. And $H_N=G$ with probability $1-\mathrm{poly}(n)$  in which case $L$ contains  all irreducible factors of $\Phi_k(\lambda)$ over $\F_q$. Moreover $\mathtt{FindCyclotomic}(g_0(\lambda), n)$ could be implemented in time $\widetilde{O}(\phi(k)\log q)$.
\end{lemma}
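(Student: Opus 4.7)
The plan is to split the statement into three pieces: (i)~the invariant that $L$ is always associated with some subgroup $H_i\supseteq G_0$ inside $G$, (ii)~the high-probability claim that the final $H_N$ equals $G$, and (iii)~the running-time bound. Piece (i) follows directly from Lemma~\ref{lem_gen} applied inductively to $H_0=G_0,H_1,\ldots,H_N$, and once $H_N=G$ the set $L$ enumerates every $G_0$-orbit in $\mu_k$, i.e.\ every irreducible factor of $\Phi_k(\lambda)$, by Lemma~\ref{lem_facphi}.

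For piece (ii), I would first note that by Lemma~\ref{lem_gen} any strict enlargement $H_i\supsetneq H_{i-1}$ satisfies $|H_i|\ge 2|H_{i-1}|$, so at most $\log_2|G|\le\log_2 n$ such events are needed to reach $G$. When $H_{i-1}\subsetneq G$ is a proper subgroup, $|G\setminus H_{i-1}|\ge|G|/2=\phi(k)/2$, and since $\ell$ is drawn uniformly from $[1,n]$ with $k\le n$ the induced distribution of $\bar\ell$ on $\Z/k\Z$ is uniform up to a factor of $2$. Hence
\[
\Pr[\bar\ell\in G\setminus H_{i-1}]\;\ge\;\tfrac{1}{2}\cdot\tfrac{\phi(k)/2}{k}\;=\;\Omega(1/\log\log n),
\]
using the Rosser--Schoenfeld bound $k/\phi(k)=O(\log\log k)$ already invoked in the paper. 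A Chernoff bound applied to $N=\lfloor c\log n\log\log n\rfloor$ trials (which stochastically dominate i.i.d.\ Bernoulli$(\Omega(1/\log\log n))$ trials as long as $H_{i-1}$ remains proper) then yields at least $\log_2 n$ enlargements except with probability $n^{-\Omega(c)}$, which is $1/\mathrm{poly}(n)$ for $c$ large enough.

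For piece (iii), the call to $\mathtt{FindOrder}$ in each outer iteration already costs $\widetilde{O}(\phi(k)\log q)$ by Lemma~\ref{lem_findorder}. The inner loop processes $|L|\cdot(e-1)$ elements in the ring $\F_q[\lambda]/(h(\lambda))$ of degree $|G_0|$, and for each element computes one modular power $r_i^\ell$ and one minimal polynomial using the same Kedlaya--Umans/Shoup primitives, at cost $\widetilde{O}(|G_0|\log q)$ per element as in the proof of Lemma~\ref{lem_findorder}. Since $|L|\cdot e=[H_i:G_0]$, one has $|L|\cdot e\cdot|G_0|=|H_i|\le\phi(k)$, so one outer iteration costs $\widetilde{O}(\phi(k)\log q)$, and $N=\widetilde{O}(1)$ iterations sum to $\widetilde{O}(\phi(k)\log q)$ as claimed.

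The main obstacle I anticipate is piece (ii): converting the ``uniform-$\ell$-in-$[1,n]$'' distribution into a clean lower bound on the probability of hitting $G\setminus H_{i-1}$, and then choosing the constant $c$ carefully so that the Chernoff-type concentration beats a trivial union-bound loss over the $O(\log n)$ doubling events needed. Pieces (i) and (iii) are essentially bookkeeping on top of Lemmas~\ref{lem_gen} and~\ref{lem_findorder}.
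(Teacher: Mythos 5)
Your proof is correct. Pieces (i) and (iii) coincide with the paper's argument: the invariant is exactly Lemma~\ref{lem_gen} applied inductively together with Lemma~\ref{lem_facphi}, and the per-iteration cost accounting $|L|\cdot\max\{e,1\}\cdot\widetilde{O}(|G_0|\log q)=[H_i:G_0]\cdot\widetilde{O}(|G_0|\log q)\leq\widetilde{O}(\phi(k)\log q)$ is the one in the paper. The genuine divergence is in piece (ii). The paper identifies $G=(\Z/k\Z)^\times$ with a product of at most $\log n$ primary cyclic groups, argues that a random $\ell\in[1,n]$ projects to a generator of a fixed primary cyclic factor with probability $\Omega(\phi(k)/k)=\Omega(1/\log\log n)$ (using $\phi(m)\geq m/2$ for prime powers $m$), and union-bounds over the factors. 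You instead use the subgroup-index argument: while $H_{i-1}\subsetneq G$ a random draw lands in $G\setminus H_{i-1}$ with probability at least $\lfloor n/k\rfloor\cdot(\phi(k)/2)/n=\Omega(1/\log\log n)$, each such draw at least doubles $|H_{i-1}|$ by Lagrange, at most $\log_2 n$ doublings are needed, and stochastic domination plus Chernoff over $N=\lfloor c\log n\log\log n\rfloor$ trials finishes. Both routes give the same $N$ and error probability, but yours is more elementary and arguably more robust: it uses nothing about the structure of $G$ beyond Lagrange's theorem, whereas the paper's component-wise union bound tacitly needs the primary cyclic factors to have pairwise coprime orders (so that generating every projection generates the product), which is not literally true for all $(\Z/k\Z)^\times$ (e.g.\ when $G$ contains $C_2\times C_2$) and would require patching. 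Your argument sidesteps that issue entirely. No gaps.
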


\begin{proof}
We want to show $H_N=G$ with probability $1-\mathrm{poly}(n)$. By Lemma \ref{lem_findorder} and  Lemma \ref{lem_gen}, it is equivalent to showing the set of $\bar{\ell}\in G$ generates $G$.
Identify $G$ with a product of at most $\log |G|\leq \log n$ primary cyclic groups $C_i$ whose orders are coprime to each other. We only need to show the the set of holomorphic images of $\bar{\ell}\in G$ generates $C_i$ for each $i$  with probability $1-\mathrm{poly}(n)$ and then apply the union bound.

So fix one such $C_i$ and let $m=|C_i|$. Then $\phi(m)$ out of the $m$ elements in $C_i$ are generators of $C_i$. Let $\alpha$ be the probability that the holomorphic image of $\bar{\ell}$ is among these $\phi(m)$ elements, where $\ell$ is randomly sampled from $[1,n]$ as on Line 3. As $m$ is a prime power, we have $\phi(m)\geq m/2$. Therefore
$$
\alpha \geq \frac{\lfloor n/k\rfloor}{n}\cdot \frac{\phi(m)}{m}\cdot |G|=\Omega(\phi(k)/k)=\Omega(1/\log\log k)
$$
where we use $k/\phi(k)=O(\log\log k)$ \cite{rs}.
So for sufficiently large $N=\lfloor c \log n\log\log n\rfloor$, the claim holds with probability $1-\mathrm{poly}(n)$.

Then we analyze the running time:  Line 4 runs in time $\widetilde{O}(\phi(k)\log q)$ by Lemma  \ref{lem_findorder}. Line 7 could be implemented in time $\widetilde{O}(|G_0|\log q)$ \cite{ku,sho}. And Line 3--9 runs in time $|L|\cdot \max\{e, 1\}\cdot \widetilde{O}(|G_0|\log q)$. This is bounded by $\widetilde{O}(\phi(k)\log q)$ since $|L|=[H_{i-1} : G_0]$, $\max\{e,1\}=[H_i:H_{i-1}]$, and $|G|=\phi(k)$. As $N=\Theta(\log n\log\log n)$,  the total running time is bounded by  $\widetilde{O}(\phi(k)\log q)$.
\end{proof}

\subsubsection{Finding the integer $k$}

Another problem we need to solve is finding the integer $k$ given an irreducible factor $g_0(\lambda)$ of $\Phi_k(\lambda)$ over $\F_q$.
Using the procedure $\mathtt{FindCyclotomic}(g_0(\lambda))$, we could find all the irreducible factors of $\Phi_k(\lambda)$ and hence $\Phi_k(\lambda)$ itself.  The degree $d:=\deg(\Phi_k(\lambda))=\phi(k)$ is hence also known.
If $|\phi^{-1}(d)|$ is small, we could find $k$ by enumerating $k_0\in\phi^{-1}(d)$ and checking if $\Phi_{k_0}(\lambda)=\Phi_k(\lambda)$.  However, Erd\H{o}s \cite{erd} showed that for some constant $c>0$, there are infinitely many integers $d$ for which $|\phi^{-1}(d)|\geq d^c$. So this approach is not affordable in general. 
Instead, we use the following procedure to find $k$ efficiently:

\begin{algorithm}[H]
\caption{$\mathtt{Findk}(d,g_0(\lambda))$}
\begin{algorithmic}[1]
\INPUT An integer $d|\phi(k)$ and an irreducible factor $g_0(\lambda)$ of $\Phi_k(\lambda)$ over $\F_q$
\OUTPUT A positive integer $k_0| k$. And $k_0=k$ if $d=\phi(k)$
\State $k_0\gets 1$
\For {\textbf{each} prime $\ell$ such that $(\ell-1)|d$}
\State $e\gets 0$,  $h(\lambda)\gets g_0(\lambda)$, and $r=\lambda \bmod g_0(\lambda) \in \F_q[\lambda]/(g_0(\lambda))$
\While{$\mathtt{FindOrder}(\ell, \{h(\lambda)\})=0$}
\State $e\gets e+1$
\State $r\gets r^\ell$
\State Let $h(\lambda)$ be the minimal polynomial of $r$ over $\F_q$
\EndWhile
\State $k_0\gets \ell^e\cdot k_0$
\EndFor
\end{algorithmic}
\end{algorithm}

\begin{lemma}\label{lem_findk}
There exists a procedure $\mathtt{Findk}(d,g_0(\lambda))$ that takes an integer $d>0$ dividing $\phi(k)$ and 
an irreducible factor $g_0(\lambda)$ of $\Phi_k(\lambda)$, and returns a positive integer $k_0|k$  in time $\widetilde{O}(\phi(k)\log q)$. Moreover $k_0=k$ if $d=\phi(k)$.
\end{lemma}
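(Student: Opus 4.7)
The plan is to analyze $\mathtt{Findk}$ by tracking the multiplicative order of the element $r$ through the inner loop for each prime $\ell$, then to separately establish the divisibility claims $k_0\mid k$ and $k_0=k$ when $d=\phi(k)$, and finally to charge the running time against the number of primes processed and the total number of inner-loop iterations.

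For correctness, fix a prime $\ell$ with $(\ell-1)\mid d$. Initially $r=\lambda\bmod g_0\in\F_q[\lambda]/(g_0)$ is a primitive $k$th root of unity, since $g_0$ is an irreducible factor of $\Phi_k(\lambda)$. After $i$ iterations of the while loop $r$ has been replaced by $r^{\ell^i}$, whose multiplicative order is exactly $k/\gcd(k,\ell^i)$; in particular this order is coprime to $\ell$ if and only if $i\geq v_\ell(k)$. By Lemma~\ref{lem_findorder}, $\mathtt{FindOrder}(\ell,\{h(\lambda)\})$ returns $0$ precisely when $\ell$ divides the order of $r$, i.e.\ when $i<v_\ell(k)$. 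Hence the inner loop terminates at $e=v_\ell(k)$ and contributes the factor $\ell^{v_\ell(k)}$ to $k_0$. Consequently
\[
k_0=\prod_{\substack{\ell\text{ prime}\\ (\ell-1)\mid d}}\ell^{v_\ell(k)},
\]
which manifestly divides $k$. When $d=\phi(k)$, the identity $\phi(k)=\prod_{\ell\mid k}(\ell-1)\ell^{v_\ell(k)-1}$ shows that $(\ell-1)\mid d$ for every prime $\ell\mid k$, so every prime divisor of $k$ is enumerated by the outer loop and $k_0=k$.

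For the running time, I would first bound the outer loop: enumerating primes $\ell$ with $(\ell-1)\mid d$ amounts to iterating over divisors $d'$ of $d$ and primality-testing $d'+1$. Since $d\leq\phi(k)\leq n$, the divisor bound gives at most $d^{o(1)}=n^{o(1)}$ candidates, and each primality test is polylogarithmic, so this enumeration is absorbed into $\widetilde{O}$. For each prime $\ell\nmid k$ the inner loop terminates after a single $\mathtt{FindOrder}$ call, while for the $\omega(k)=O(\log k)$ primes $\ell\mid k$ the loop runs $v_\ell(k)+1$ times, and $\sum_{\ell\mid k}(v_\ell(k)+1)=O(\log k)$. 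Each inner iteration consists of a $\mathtt{FindOrder}$ call costing $\widetilde{O}(\phi(k)\log q)$ by Lemma~\ref{lem_findorder}, an $\ell$th-power computation in $\F_q[\lambda]/(g_0(\lambda))$ of cost $\widetilde{O}(\phi(k)\log q)$, and a minimal-polynomial computation in the same budget via the Kedlaya--Umans implementation of Shoup's algorithm~\cite{ku,sho}. Summing yields the claimed $\widetilde{O}(\phi(k)\log q)$ bound.

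The main potential obstacle is verifying that the order of $r^{\ell^i}$ is \emph{exactly} $k/\gcd(k,\ell^i)$ rather than some proper divisor thereof; this is the linchpin of correctness and follows immediately from $g_0\mid\Phi_k(\lambda)$, which forces $r$ to have exact order $k$ at the start, but I would state it as an explicit observation rather than letting it be implicit. A secondary subtlety is the outer-loop bookkeeping, requiring the standard divisor bound $d(n)=n^{o(1)}$ together with fast primality testing, both of which are comfortably absorbed by the $\widetilde{O}$ notation.
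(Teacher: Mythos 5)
Your proposal is correct and follows essentially the same route as the paper: enumerate primes $\ell$ with $(\ell-1)\mid d$ (using the divisor bound to control their number), show the inner loop computes $e=v_\ell(k)$, and observe that $d=\phi(k)$ guarantees every prime divisor of $k$ is enumerated. The only cosmetic difference is that you phrase the loop invariant via the exact multiplicative order $k/\gcd(k,\ell^i)$ of $r$, whereas the paper phrases it as invertibility of $\ell$ in $\Z/(k/\ell^e)\Z$ when applying Lemma~\ref{lem_findorder} with $k/\ell^e$ in place of $k$ --- these are equivalent.
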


\begin{proof}
First assume $d=\phi(k)$. We compute $k_0=k$ by determining its prime divisors $\ell$ and $v_\ell(k)$.
Note that if a prime integer $\ell$ divides $k$, we have $(\ell-1) | \phi(k)$.
So we enumerate all primes $\ell$ for which $(\ell-1) | d$ as on Line 2. Then Line 3--8 determines  $e=v_\ell(k)$.
To do this, we start with $e=0$ and keep increasing $e$ until $\ell$ is not invertible in $\Z/(k/\ell^e)\Z$,  or equivalently $e=v_\ell(k)$, i.e., $e$ is the integer satisfying $\ell^e|k$ and $\ell^{e+1}\nmid k$.
To check if $\ell$ is not invertible in $\Z/(k/\ell^e)\Z$, we maintain $r$ as a primitive $(k/\ell^e)$th root of unity and $h(\lambda)$ its minimal polynomial over $\F_q$. Applying Lemma \ref{lem_findorder} with $k'=k/\ell^e$ in place of $k$, we see $\ell$ is not invertible in $\Z/(k/\ell^e)\Z$ if and only if $\mathtt{FindOrder}(\ell, \{h(\lambda)\})$ returns 0, which is checked on Line 4.
Finally, $k_0$ is the product of all $\ell^e=\ell^{v_\ell(k)}$ at the end of the procedure and hence equals $k$.\\ \\
On the other hand,  if $d$ is only a proper divisor of $\phi(k)$, not necessarily all prime divisors $\ell$ of $k$ are enumerated. But we still have $k_0|k$ by the argument above.\\ \\
For the running time, note that the number of $\ell$ we enumerate is bounded by the number of divisors of $d\leq k$, which is bounded by $k^{O(1/\log\log k)}=\phi(k)^{o(1)}$ by a classical result of Wigert \cite{wig}. The inner loop is executed at most $\log k$ times for each $\ell$. The condition on Line 4 can be checked in time  $\widetilde{O}(\phi(k)\log q)$  by Lemma \ref{lem_findorder}  and Line 7 can also be implemented in time $\widetilde{O}(\phi(k)\log q)$ \cite{ku,sho}. The claim follows.
\end{proof}

\subsubsection{Finding the set $T$}

Now we are ready to describe the procedure $\mathtt{FindT}(g_1(\lambda), \dots, g_m(\lambda))$:

\begin{algorithm}[H]
\caption{$\mathtt{FindT}(g_1(\lambda), \dots, g_m(\lambda))$}
\begin{algorithmic}[1]
\INPUT The irreducible factors $g_1(\lambda), \dots, g_m(\lambda)$ of $g(\lambda)$ over $\F_q$
\OUTPUT The set $T$ and multiplicities $m_k$ for each $k\in T$
\State $L_0\gets \{g_1(\lambda), \dots, g_m(\lambda)\}$ as a multi-set and $T\gets \emptyset$
\Repeat
\State Pick an arbitrary element $g_0(\lambda)\in L_0$
\State $L\gets\mathtt{FindCyclotomic}(g_0(\lambda))$
\State $h(\lambda)\gets \prod_{f_i(\lambda)\in L} f_i(\lambda)$, $d\gets \deg(h(\lambda))$
\State $k_0\gets \mathtt{Findk}(d,g_0)$
\If {$h(\lambda)|\lambda^{k_0}-1$}
\State \textbf{if} $k_0\not\in T$ \text{then} $m_{k_0}\gets 0$
\State $T\gets T\cup \{k_0\}$, $m_{k_0}\gets m_{k_0}+1$
\State $L_0\gets L_0-L$
\EndIf
\Until{$L_0=\emptyset$}
\State \Return $T$
\end{algorithmic}
\end{algorithm}

\begin{theorem}[Theorem \ref{thm_timeT} restated]
$\mathtt{FindT}(g_1(\lambda),\dots,g_m(\lambda))$ computes the set $T$ and multiplicities $m_k$  as defined in Algorithm \ref{algo2}, Line $4$. Moreover it halts in time $\widetilde{O}(n \log q)$  with probability $1-1/\mathrm{poly}(n)$.  
\end{theorem}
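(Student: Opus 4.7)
The plan is to establish correctness conditional on every invocation of $\mathtt{FindCyclotomic}$ succeeding, separately bound the total running time under this event, and then close with a union bound over all calls.

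First I would verify correctness in the ``good'' case. Every irreducible factor of $g(\lambda) = \mathrm{lcm}(\lambda^{d_1}-1,\dots,\lambda^{d_m}-1)$ divides $\Phi_k(\lambda)$ for some $k$ coprime to $p$: writing $d_i = kp^e$ with $p \nmid k$ gives $\lambda^{d_i}-1 = \prod_{k_0 \mid k} \Phi_{k_0}(\lambda)^{p^e}$. Hence the element $g_0$ picked on Line 3 is an irreducible factor of some $\Phi_k$ with $k \in T$. Assuming the subsequent call to $\mathtt{FindCyclotomic}$ succeeds, Lemma \ref{lem_findcyc} gives $L$ equal to the full set of irreducible factors of $\Phi_k$, so $h(\lambda) = \Phi_k(\lambda)$ and $d = \phi(k)$. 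Then Lemma \ref{lem_findk} yields $k_0 = k$, and the check $h(\lambda) \mid \lambda^{k_0}-1$ passes. If instead the call fails, then $h$ is a strict divisor of $\Phi_k$ whose roots are primitive $k$th roots of unity, which do not satisfy $\lambda^{k_0}=1$ for any proper divisor $k_0$ of $k$; so the check fails and the iteration makes no modification, meaning failed calls only slow the algorithm and never inject wrong entries into $T$.

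Next I would track the multiplicity count by induction on the loop counter: at each step, for every $k \in T$, the multi-set $L_0$ contains exactly $m_k$ minus the number of prior successful iterations targeting $k$ copies of each irreducible factor of $\Phi_k$. The base case follows from $g(\lambda) = \prod_{k \in T} \Phi_k(\lambda)^{m_k}$. The inductive step is immediate: a successful iteration targeting $k$ subtracts one copy of each factor in $L$ from $L_0$ and increments the stored $m_{k_0}$. Hence after exactly $m_k$ such iterations per $k$, the multi-set $L_0$ is emptied and the returned value of $m_{k_0}$ equals the true multiplicity.

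For the running time, cap the loop at $n$ iterations (aborting otherwise); in the all-success event the number of iterations is $\sum_{k \in T} m_k \le n$, each costing $\widetilde{O}(\phi(k)\log q)$ by Lemmas \ref{lem_findcyc} and \ref{lem_findk}, with the divisibility test on Line 7 and the multi-set subtraction on Line 10 adding at most $\widetilde{O}(\phi(k)\log q)$ more. Summing,
\[
\sum_{k \in T} m_k \cdot \widetilde{O}(\phi(k)\log q) = \widetilde{O}(\log q)\sum_{k \in T} m_k\,\phi(k) \le \widetilde{O}(n\log q),
\]
since $\sum_{k \in T} m_k\phi(k) = \deg(g) \le n$. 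A union bound over the at-most-$n$ calls to $\mathtt{FindCyclotomic}$, each failing with probability $1/\mathrm{poly}(n)$ by Lemma \ref{lem_findcyc}, then yields success probability $1-1/\mathrm{poly}(n)$. The delicate point I anticipate is the interaction between randomness and multi-set arithmetic: one must confirm that a failed call to $\mathtt{FindCyclotomic}$ is reliably detected by the divisibility check so that errors are silently absorbed rather than propagated into $T$ or $L_0$, which is what allows the simple union bound above to suffice.
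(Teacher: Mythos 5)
Your overall structure---prove correctness conditional on every call to $\mathtt{FindCyclotomic}$ succeeding, bound the time in that event by $\sum_{k\in T} m_k\cdot\widetilde{O}(\phi(k)\log q)=\widetilde{O}(n\log q)$ using $\sum_{k\in T} m_k\phi(k)\leq\deg(g)\leq n$, and finish with a union bound over the at most $n$ calls---is exactly the paper's argument, and your induction on the multi-set $L_0$ is a reasonable way to make the paper's ``correctness is then straightforward'' explicit.

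However, the side claim you flag as the delicate point is false as stated. You assert that when $\mathtt{FindCyclotomic}$ fails, the roots of $h(\lambda)$, being primitive $k$th roots of unity, cannot satisfy $\lambda^{k_0}=1$, ``so the check fails.'' That argument only covers the case where $\mathtt{Findk}$ returns a \emph{proper} divisor $k_0$ of $k$. Lemma \ref{lem_findk} guarantees $k_0\mid k$, and $k_0=k$ when $d=\phi(k)$, but it does not guarantee $k_0\neq k$ when $d$ is a proper divisor of $\phi(k)$. Concretely, take $k=9$ (so $\phi(k)=6$) with $q\equiv 1\pmod 9$, and suppose the failed call returns an $L$ associated with the order-$2$ subgroup, so $d=2$. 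The only prime $\ell\mid k$ is $3$, and $\ell-1=2$ divides $d$, so $\mathtt{Findk}(2,g_0)$ still recovers $k_0=9=k$; the test $h(\lambda)\mid\lambda^{9}-1$ then passes because $h(\lambda)\mid\Phi_9(\lambda)\mid\lambda^9-1$. In that round $m_9$ is incremented but only part of one copy of $\Phi_9$ is removed from $L_0$, and the orphaned factors cause $m_9$ to be over-counted later. So the Line-7 test certifies only $k=k_0$ (which is all the paper claims for it), not completeness of $L$, and failures are not always silently absorbed. This does not sink your proof: since you establish correctness under the all-success event and bound that event's probability by a union bound, the theorem's guarantee with probability $1-1/\mathrm{poly}(n)$ follows without the detection claim---but you should delete the assertion that the divisibility check reliably catches every failure, rather than rely on it.
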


\begin{proof}
The algorithm picks $g_0(\lambda)$ from $L_0$, calls $\mathtt{FindCyclotomic}$ to find a list $L\subseteq L_0$ that almost surely contains all the irreducible factors of $\Phi_k(\lambda)$, and remove these factors from $L_0$. It repeats these steps until $L_0$ is empty. Each time it also determines the integer $k$ using $\mathtt{Findk}$, adds it to $T$ and updates $m_k$.\\ \\
Note that with small probability, the list $L$ returned by $\mathtt{FindCyclotomic}$  may not contain all the irreducible factors, in which case it is associated with a proper subgroup $H_N\subseteq G$ (c.f. Lemma \ref{lem_findcyc}). In any case we have $\deg(h(\lambda))|\phi(k)$ and therefore by Lemma \ref{lem_findk}, the integer $k_0$ returned by $\mathtt{Findk}$ divides $k$.
We verify that $k=k_0$ on Line 7: $h(\lambda) | (\lambda^{k_0}-1)$ if and only if $k|k_0$ if and only if $k=k_0$ since we know $k_0|k$. And if we find $k\neq k_0$ we do nothing in that round. The correctness of the algorithm is then straightforward. \\ \\
For the running time, note that each round runs in time $\widetilde{O}(\phi(k)\log q)$ by Lemma \ref{lem_findcyc} and Lemma \ref{lem_findk}, and  
then factors of total degree $\phi(k)$ are removed from $L_0$ with probability $1-1/\mathrm{poly}(n)$. So  with probability $1-1/\mathrm{poly}(n)$, the total running time is bounded by $\sum_{i=1}^m  \widetilde{O}(\deg(g_i(\lambda)) \log q)=\widetilde{O}(n \log q)$.
\end{proof}

\section{Polynomial Factorization Using Carlitz Modules}\label{section_carlitz}
We next establish connections between polynomial factorization and the Carlitz action. We prove two nearly linear reductions, namely \textsc{Factor Degree} to \textsc{Carlitz Char-Poly} and  \textsc{Carlitz Char-Poly} to \textsc{Factor}. The former reduction requires that the characteristic $p$ of $\F_q$ is larger than the number of irreducible factors. 
\subsection{Carlitz Modules}
Let $A$ be an $\F_q[x]$-algebra. For $f(x) \in \F_q[x]$ and $\alpha \in A$, $f(x)\alpha$ is understood to be the result of the $\F_q[x]$ action of $f(x)$ on $\alpha$ in $A$. Let $\sigma : A \longrightarrow A$ and $\tau : A \longrightarrow A$ denote the $q^{th}$ power Frobenius endomorphism and the multiplication by $x$ endomorphism respectively. That is, $\forall \alpha \in A$, $\sigma(\alpha)=\alpha^q$ and $\tau(\alpha)=x\alpha$. In \cite{car,car1}, Carlitz endowed a new $\F_q[x]$-module structure on $A$ by defining $m(x) = \sum_i m_i x^i \in \F_q[x]$ to act on $\alpha \in A$ as $$\rho_m(\alpha):= \left(m(\sigma + \tau)\right)(\alpha) = \left(\sum_i m_i (\sigma+\tau)^i \right)(\alpha).$$
In particular, $\forall \alpha \in A$, $\rho_x(\alpha) = \alpha^q+x\alpha$ and $\forall u \in \F_q$, $\rho_u(\alpha) = u\alpha$. 
Let $\rho(A)$ denote the $\F_q[x]$-module structure thus endowed to $A$ by the Carlitz action. To factor a monic square free polynomial $f(x)$, we will concern ourselves with $\rho(\F_q[x]/(f(x)))$. Let $\chi_f(x) \in \F_q[x]$ denote the characteristic polynomial of the $\F_q$ linear transformation on $\F_q[x]/(f(x))$ that takes $\alpha \in \F_q[x]/(f(x))$ to $\rho_x(\alpha)$. Hence \textsc{Carlitz Char-Poly} may be restated as
\begin{problem}\label{carlitz_problem} Given a monic square free $f(x) \in \F_q[x]$, compute $\chi_f(x)$.
\end{problem}
By Lemma \ref{carlitz_crt_lemma}, knowledge of factorization of $f(x)$ immediately yields $\chi_f(x)$ in $\widetilde{O}(n \log q)$ time. Thus Problem \textsc{Carlitz Char-Poly} is linear time reducible to \textsc{Factor}. We next reduce \textsc{Factor Degree} to \textsc{Carlitz Char-Poly}.
\subsection{Factor Degree Estimation using Carlitz Modules}\label{carlitz_crt_section}
\begin{lemma}\label{carlitz_crt_lemma}
Let $f(x) = \prod_if_i(x)$ be a factorization of a monic square free $f(x) \in \F_q[x]$ into monic irreducible polynomials . Then $ \rho(\F_q[x]/(f(x)))\cong \bigoplus_i \F_q[x]/(f_i(x)-1).$ In particular, $\chi_f(x) = \prod_i(f_i(x) -1)$.
\end{lemma}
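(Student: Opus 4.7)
The plan is to first apply the Chinese Remainder Theorem to reduce to the case of a single irreducible factor, and then analyze the Carlitz operator $\sigma+\tau$ on that factor by passing to the algebraic closure. Since $f$ is square-free, the natural map $\F_q[x]/(f(x))\to\bigoplus_i \F_q[x]/(f_i(x))$ is an isomorphism of $\F_q$-algebras. It commutes with $\sigma$ (being a ring homomorphism) and with $\tau$ (being $\F_q[x]$-linear), hence with $\sigma+\tau$, so it is an isomorphism of Carlitz modules. It therefore suffices to prove $\rho(K_i)\cong \F_q[x]/(f_i(x)-1)$ for each irreducible factor $K_i:=\F_q[x]/(f_i(x))$ of degree $d:=\deg f_i$.

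To analyze $\sigma+\tau$ on $K_i$, I would pass to the algebraic closure via the standard Galois decomposition $K_i\otimes_{\F_q}\overline{\F_q}\cong\overline{\F_q}^{d}$ given by $\alpha\otimes 1\mapsto (\alpha,\sigma(\alpha),\ldots,\sigma^{d-1}(\alpha))^{T}$. Under this identification, $\sigma$ becomes the cyclic shift $S\colon(v_0,\ldots,v_{d-1})\mapsto(v_1,\ldots,v_{d-1},v_0)$, and $\tau$ becomes the diagonal matrix $D=\operatorname{diag}(\theta,\theta^{q},\ldots,\theta^{q^{d-1}})$ whose entries are the Galois conjugates of $\theta$ (the image of $x$ in $K_i$). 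The characteristic polynomial of $\sigma+\tau$ is then the cyclic tridiagonal determinant $\det(yI-(S+D))$, which I would evaluate by cofactor expansion along the first column: the diagonal principal minor contributes $\prod_{j=0}^{d-1}(y-\theta^{q^{j}})=f_i(y)$, and the corner entry contributes an additional $-1$, yielding $\chi_{\sigma+\tau}(y)=f_i(y)-1$. Multiplicativity of characteristic polynomials across the CRT decomposition then gives the ``in particular'' identity $\chi_f(x)=\prod_i(f_i(x)-1)$.

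To upgrade this to the full module isomorphism, I must also show $\sigma+\tau$ is non-derogatory, so that $\rho(K_i)$ is a cyclic $\F_q[x]$-module and hence isomorphic to $\F_q[x]/(\chi_{\sigma+\tau})=\F_q[x]/(f_i(x)-1)$. I would verify this by bounding each eigenspace of $S+D$ over $\overline{\F_q}$: reading $(S+D)v=\lambda v$ componentwise gives the one-step recurrence $v_{i+1}=(\lambda-\theta^{q^{i}})v_i$, so $v$ is determined by the scalar $v_0$; the cycle-consistency condition $v_d=v_0$ recovers exactly $f_i(\lambda)=1$ (matching the characteristic polynomial) and simultaneously forces $\lambda\neq\theta^{q^{i}}$ for every $i$, so no factor $(\lambda-\theta^{q^{i}})$ vanishes and the eigenspace is one-dimensional. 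The main obstacle is exactly this non-derogatoriness step: $f_i(x)-1$ need not be separable (e.g.\ $f_i=x^{2}+1$ over $\F_3$ gives $f_i-1=x^{2}$), so the quick ``distinct eigenvalues imply cyclic'' shortcut fails, and the explicit eigenvector recurrence above is what makes the bound on eigenspace dimensions work uniformly.
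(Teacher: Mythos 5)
Your proof is correct, and its overall skeleton---reduce via the Chinese Remainder Theorem to a single irreducible factor, then identify the Carlitz module structure on $\F_q[x]/(f_i(x))$---is the same as the paper's. The difference is in how the irreducible case is handled: the paper cites the literature both for the compatibility of $\rho$ with direct products and for the key isomorphism $\rho(\F_q[x]/(g(x)))\cong\F_q[x]/(g(x)-1)$ for irreducible $g$, whereas you prove that isomorphism from scratch. Your computation checks out: after base change to $\overline{\F_q}$ the operator $\sigma+\tau$ becomes the cyclic shift plus $\operatorname{diag}(\theta,\theta^q,\dots,\theta^{q^{d-1}})$; the two nonzero cofactors in the first column of $yI-(S+D)$ give $\prod_{j}(y-\theta^{q^j})-1=f_i(y)-1$; and the recurrence $v_{j+1}=(\lambda-\theta^{q^j})v_j$ shows every eigenspace is at most one-dimensional since the whole vector is determined by $v_0$, so the operator is non-derogatory, its minimal polynomial equals its characteristic polynomial (both being insensitive to the base change), and the module is cyclic, hence isomorphic to $\F_q[x]/(f_i(x)-1)$. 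You are also right to flag that $f_i(x)-1$ can be inseparable, so non-derogatoriness genuinely has to be argued rather than read off from distinct eigenvalues; that is exactly the point a careless version of this argument would miss. What your route buys is a self-contained argument and an explicit description of the Carlitz action; what the paper's route buys is brevity by outsourcing the key fact to a reference.
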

\begin{proof}
By the Chinese remainder theorem, $\F_q[x]/(f(x)) \cong \prod_i \F_q[x]/(f_i(x))$
\begin{equation}\label{carlitz_crt_equation}
\Rightarrow \rho(\F_q[x]/(f(x))) \cong \rho\left(\prod_i \F_q[x]/(f_i(x))\right) \cong \bigoplus_i \rho\left(\F_q[x]/(f_i(x))\right).
\end{equation}
The final congruence holds since for every direct product $C \cong A \times B$ of $\F_q[x]$-algebras, we have the corresponding direct sum $\rho(C) \cong \rho(A) \oplus \rho(B)$ of $\F_q[x]$-modules \cite{kcon}. For a monic irreducible $g(x)$ (\cite{kcon}), 
\begin{equation}\label{carlitz_irred}
\rho(\F_q[x]/(g(x))) \cong \F_q[x]/(g(x)-1).
\end{equation}
Equation \ref{carlitz_crt_equation} and \ref{carlitz_irred} together prove the lemma.  
\end{proof}
\begin{lemma}\label{factor_degree} If $p$ does not divide the number of smallest degree factors of a monic square free $f(x) \in \F_q[x]$, then the smallest irreducible factor degree of $f(x)$ is $\deg(f(x)) - \deg(f(x) -  \chi_f(x))$.
\end{lemma}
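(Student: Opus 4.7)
The plan is to exploit the explicit formula $\chi_f(x) = \prod_i (f_i(x) - 1)$ from Lemma \ref{carlitz_crt_lemma} and read off the degree of $f(x) - \chi_f(x)$ by inclusion-exclusion.

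First, I would expand
\[
\chi_f(x) \;=\; \prod_{i=1}^r (f_i(x) - 1) \;=\; \sum_{S \subseteq [r]} (-1)^{|S|} \prod_{i \notin S} f_i(x),
\]
so that the $S = \emptyset$ term cancels $f(x) = \prod_i f_i(x)$ and we obtain
\[
f(x) - \chi_f(x) \;=\; -\sum_{\emptyset \neq S \subseteq [r]} (-1)^{|S|} \prod_{i \notin S} f_i(x).
\]
Each summand is a monic polynomial of degree $n - \sum_{i \in S} d_i$, where $n = \deg f$ and $d_i = \deg f_i$. Hence the maximum possible degree appearing in the sum equals $n - d$, where $d := \min_i d_i$ is the smallest irreducible factor degree, and this maximum is attained exactly by the singletons $S = \{j\}$ with $d_j = d$.

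Next, I would isolate the top-degree part. Let $k$ be the number of indices $j$ with $d_j = d$. The singletons $S = \{j\}$ with $d_j = d$ each contribute the monic polynomial $\prod_{i \neq j} f_i(x)$ (of degree $n - d$), carrying sign $-(-1)^1 = +1$. Summing these yields a leading term $k \cdot x^{n-d}$, and every other nonempty $S$ satisfies $\sum_{i \in S} d_i > d$ and so contributes a polynomial of strictly smaller degree. Therefore the coefficient of $x^{n-d}$ in $f(x) - \chi_f(x)$ is exactly $k \bmod p$, and all higher coefficients vanish.

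Finally, the hypothesis $p \nmid k$ guarantees this leading coefficient is nonzero in $\F_q$, so $\deg(f(x) - \chi_f(x)) = n - d$, giving $d = \deg(f(x)) - \deg(f(x) - \chi_f(x))$ as claimed. The only subtle point in the argument is the potential collapse of the leading coefficient over a field of characteristic $p$, and this is precisely what the hypothesis on $k$ is designed to rule out; no further obstacles are anticipated.
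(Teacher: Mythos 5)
Your proposal is correct and follows essentially the same route as the paper: both use $\chi_f(x)=\prod_i(f_i(x)-1)$ from Lemma \ref{carlitz_crt_lemma}, expand the product, observe that the degree-$(n-d)$ part of $f(x)-\chi_f(x)$ comes exactly from the terms $f(x)/f_j(x)$ with $\deg f_j = d$, and invoke the hypothesis $p \nmid k$ to keep the leading coefficient $k$ from vanishing in characteristic $p$. Your version merely spells out the inclusion--exclusion expansion and the bound $\sum_{i\in S} d_i > d$ for non-singleton $S$ more explicitly than the paper does.
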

\begin{proof}  Let $f(x) = \prod_{i}f_i(x)$ be a factorization of a monic square free $f(x) \in \F_q[x]$ into monic irreducible polynomials.
Let $d$ be the smallest degree of factors of $f(x)$.
Then  
$$f(x)-\chi_f(x) = f(x) - \prod_i (f_i(x)-1) = \sum_{i} \frac{{f(x)}}{f_i(x)} + (\text{terms of degree less than } \deg(f(x))-d).$$
The first equality is from Lemma \ref{carlitz_crt_lemma}. Since $f(x)$ and $f_i(x)$ are all monic and $p$ does not divide the number of  $f_i(x)$ of degree $d$,
the leading term of $\sum_{i} ({f(x)}/f_i(x))$ is of degree $\deg(f(x))-d$. Therefore $\deg(f(x)-\chi_f(x))=\deg(f(x))-d$ and the lemma follows.
\end{proof}
\noindent Lemma \ref{factor_degree} reduces in nearly linear time \textsc{Factor Degree} (when restricted to $p$ greater than the number of factors of $f(x)$) to \textsc{Carlitz Char-Poly}. To see this, given $f(x)$, we may call an algorithm that solves Problem \ref{carlitz_problem} to obtain $\chi_f(x)$ and output $\deg(f(x)) - \deg(f(x)-\chi_f(x))$.

\section{Moore and Vandermonde Determinants}\label{section_moore}
\subsection{Moore Determinants and Carlitz Factorials}
Let $A$ be a finitely generated $\F_q$ algebra and $n$ a positive integer. The Moore matrix $M_w$ with first row $w=(w_1,w_2,\ldots,w_n) \in A^n$ is defined as 
\[M_w:=
\begin{bmatrix}
    w_1 & w_2 & w_3 & \dots  & w_n \\
    w_1^q & w_2^q & w_3^q & \dots  & w_n^q \\
    w_1^{q^2} & w_2^{q^2} & w_3^{q^2} & \dots  & w_n^{q^2} \\
    \vdots & \vdots & \vdots & \ddots & \vdots \\
    w_1^{q^{n-1}} & w_2^{q^{n-1}} & w_3^{q^{n-1}} & \dots  & w_n^{q^{n-1}}
\end{bmatrix}
\] 
and its determinant $det(M_w)$ is denoted by $\Delta(w_1,w_2,\ldots,w_n)$. 
For a positive integer $m$, the $m^{th}$ Carlitz factorial 
$$\prod_{0\leq i<j\leq m}\left(x^{q^{j-i}}-x\right)^{q^i},$$ is the product of all polynomials over $\F_q$ of degree at most $m$ \cite{car}. We next recall Carlitz's identity and from it reduce \textsc{Factor Degree} to computing certain Moore determinants.
\begin{lemma}\label{carlitz_factorial} (\textit{Carlitz \cite{car}}) For every positive integer $m$,
$$\Delta(1,x,x^2,\ldots,x^m) = \prod_{0\leq i<j\leq m}\left(x^{q^{j-i}}-x\right)^{q^i},$$
\end{lemma}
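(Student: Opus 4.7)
The plan is to evaluate $\Delta(1,x,\ldots,x^m)$ via a general factorization of Moore determinants, and then identify the resulting expression with the right-hand side by a multiplicity count at each monic irreducible of $\F_q[x]$.

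First I would establish the Moore factorization formula: for any elements $w_0,\ldots,w_m$ of a commutative $\F_q$-algebra,
$$\Delta(w_0,\ldots,w_m) \;=\; \prod_{j=0}^{m}\;\prod_{(c_0,\ldots,c_{j-1})\in\F_q^j}\bigl(w_j + c_{j-1}w_{j-1}+\cdots+c_0w_0\bigr).$$
The key observation is that adding an $\F_q$-multiple of one column of the Moore matrix to another leaves the determinant unchanged, because $(w+cw')^{q^i}=w^{q^i}+c\,w'^{q^i}$ whenever $c\in\F_q$. Viewing $\Delta$ as a polynomial in $w_m$ over $\F_q[w_0,\ldots,w_{m-1}]$, this observation produces $q^m$ distinct linear factors of the shape above at level $j=m$. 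A degree count in $w_m$ (each side has $w_m$-degree $q^m$) together with a comparison of the coefficient of $w_m^{q^m}$ identifies the remaining cofactor as $\Delta(w_0,\ldots,w_{m-1})$, and an induction on $m$ closes the formula.

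Specializing $w_j=x^j$, the $j$-th inner product runs over exactly the set of monic polynomials of degree $j$ in $\F_q[x]$, giving
$$\Delta(1,x,\ldots,x^m) \;=\; \prod_{\substack{f\in\F_q[x]\text{ monic}\\ \deg f\le m}} f(x).$$
To finish, I would match this with $\prod_{0\le i<j\le m}(x^{q^{j-i}}-x)^{q^i}$ by checking that every monic irreducible $\pi\in\F_q[x]$ of degree $d$ appears with the same multiplicity on each side. On the right, since $\pi \mid x^{q^k}-x$ precisely when $d\mid k$ (and then with multiplicity one), the exponent of $\pi$ is $\sum_{0\le i<j\le m,\, d\mid(j-i)}q^i = \sum_{e=1}^{\lfloor m/d\rfloor}\sum_{i=0}^{m-ed}q^i$. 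On the left, the monic polynomials of degree at most $m$ divisible by $\pi^e$ are parametrized by $g\mapsto \pi^e g$ with $g$ monic of degree at most $m-ed$, contributing $(q^{m-ed+1}-1)/(q-1)$ factors of $\pi$ for each $e$; summing over $e\geq 1$ yields the same total.

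The conceptual heart of this argument is the Moore factorization formula, where the column-operation invariance and the ensuing degree and leading-coefficient bookkeeping must be set up carefully so that the induction closes cleanly. Once that identity is in hand, the reduction to a product of monic polynomials of bounded degree is automatic, and the final multiplicity matching is a routine double count.
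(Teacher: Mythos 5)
Your argument is correct, but it takes a genuinely different and much heavier route than the paper. The paper's proof is essentially one line: since $(x^j)^{q^i}=(x^{q^i})^j$, the Moore matrix on $(1,x,\dots,x^m)$ \emph{is} a Vandermonde matrix with nodes $x^{q^i}$, so the classical Vandermonde formula gives $\prod_{0\le i<j\le m}(x^{q^j}-x^{q^i})$, and the Frobenius identity $x^{q^j}-x^{q^i}=(x^{q^{j-i}}-x)^{q^i}$ finishes. You instead invoke Moore's general factorization of $\Delta(w_0,\dots,w_m)$ into $\F_q$-linear forms, specialize to $w_j=x^j$ to realize the determinant as the product of all monic polynomials of degree at most $m$, and then match the multiplicity of each irreducible against the right-hand side. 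Each step of that chain is sound (the Moore formula should be proved with the $w_j$ as indeterminates, as you do, so that the $q^m$ pairwise non-associate linear forms divide the determinant in a UFD; and at the end you should note that both sides are monic, so that matching irreducible multiplicities suffices to conclude equality of polynomials rather than equality up to a unit). What your longer route buys is a proof of the arithmetic fact the paper only asserts without argument just before the lemma, namely that the Carlitz factorial $\prod_{0\le i<j\le m}(x^{q^{j-i}}-x)^{q^i}$ equals the product of all monic polynomials of degree at most $m$; what the paper's route buys is brevity, since it never needs Moore's factorization or any counting.
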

\begin{proof}
The Moore matrix with first row $(1,x,x^2,\ldots,x^m)$, when viewed column-wise is Vandermonde. By the Vandermonde determinant formula,
\[
det\left(\begin{bmatrix}
    1 & x & x^2 & \dots  & x^{m} \\
    1 & x^q & x^{2q} & \dots  & x^{mq} \\
    1 & x^{q^2} & x^{2q^2} & \dots  & x^{mq^2} \\
    \vdots & \vdots & \vdots & \ddots & \vdots \\
    1 & x^{q^{n-1}} & x^{2q^{n-1}} & \dots  & x^{mq^{n-1}}
\end{bmatrix}\right) = \prod_{0\leq i<j \leq m} \left(x^{q^j}-x^{q^i}\right) = \prod_{0\leq i<j\leq m}\left(x^{q^{j-i}}-x\right)^{q^i}
\] 
\end{proof}
\textsc{Moore-Det} may be restated as
\begin{problem}\label{moore_det_problem} Given a square free monic polynomial $f(x) \in \F_q[x]$ of degree $n$ and a positive integer $m\leq n$, decide if $\Delta(1, x,\ldots, x^{m}) \mod f(x)$ is $0$.
\end{problem}
Problem \ref{moore_det_problem} can be solved in $\widetilde{O}(n^{3/2} \log q+n\log^2 q)$ time \cite[Lemma~8.4]{ku}.
\begin{theorem}\label{moore_zero_theorem}
If there is a $T(n,m,\log q)$ time algorithm for Problem \ref{moore_det_problem}, then \textsc{Factor Degree} can be solved in $O(T(n,\lceil n/2\rceil, \log q) \log n)$ time. That is, \textsc{Factor Degree} is nearly linear time reducible to \textsc{Moore-Det}.
\end{theorem}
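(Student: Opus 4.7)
The plan is to interpret the Moore--Det decision oracle, via Carlitz's identity (Lemma \ref{carlitz_factorial}), as a membership test for ``$d_{\max}(f) \le m$'', where $d_{\max}(f)$ denotes the largest degree of an irreducible factor of $f$, and then to locate $d_{\max}(f)$ by binary search. From Lemma \ref{carlitz_factorial}, $\Delta(1,x,\ldots,x^m) = \prod_{0 \le i < j \le m}(x^{q^{j-i}}-x)^{q^i}$, so for every $k \in \{1,\ldots,m\}$ the polynomial $x^{q^k}-x$ appears with positive exponent in the product (take $i=0,\ j=k$). The classical fact that $x^{q^k}-x$ equals the product of all monic irreducibles in $\F_q[x]$ of degree dividing $k$ yields: a monic irreducible $P$ divides $\Delta(1,x,\ldots,x^m)$ iff $\deg P \mid k$ for some $k \in \{1,\ldots,m\}$, iff $\deg P \le m$. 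Since $f$ is squarefree, $f \mid \Delta(1,x,\ldots,x^m)$ iff every irreducible factor of $f$ has degree at most $m$, giving the equivalence
\[
\Delta(1,x,\ldots,x^m) \equiv 0 \pmod{f(x)} \quad \iff \quad d_{\max}(f) \le m.
\]

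With this equivalence, $d_{\max}(f)$ is the unique threshold at which the oracle's yes/no answer flips as $m$ ranges over $[1,n]$. I would locate it by a standard binary search: issue the first query at $m = \lceil n/2 \rceil$, recurse into $[1, \lceil n/2 \rceil]$ on YES or into $(\lceil n/2 \rceil, n]$ on NO, and continue for $O(\log n)$ rounds. The value returned is a genuine irreducible factor degree of $f$, so it solves \textsc{Factor Degree}.

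For the running time, the YES branch of the top-level query is immediate: every subsequent oracle call has $m \le \lceil n/2 \rceil$, so summing $O(\log n)$ calls gives total cost $O(T(n,\lceil n/2 \rceil, \log q) \log n)$ by monotonicity of $T$ in $m$. The main obstacle is the NO branch, where $d_{\max}(f) > n/2$, since a naive continuation of the binary search there uses queries with $m > \lceil n/2 \rceil$. The structural fact that saves us is that in this branch $f$ has a \emph{unique} irreducible factor of degree exceeding $n/2$, so either $f$ is itself irreducible (in which case returning $n$ suffices, after a cheap irreducibility check) or $f$ has an irreducible factor of degree at most $n/2$ whose degree is already a valid output for \textsc{Factor Degree}. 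Arranging the algorithm so that all oracle calls in this branch also respect $m \le \lceil n/2 \rceil$---for instance by reducing to a related polynomial of degree at most $\lceil n/2 \rceil$ after isolating the large factor, or by mirroring the binary search into the complement range $[n - \lceil n/2 \rceil, n]$ via the oracle applied to a transformed polynomial---is where the main technical effort of the proof would go.
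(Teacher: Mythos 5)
Your core argument---Carlitz's identity gives the equivalence $\Delta(1,x,\ldots,x^m)\equiv 0 \pmod{f}$ iff $d_{\max}(f)\le m$ (using squarefreeness of $f$ and the fact that $x^{q^k}-x$ is the product of the irreducibles of degree dividing $k$), followed by binary search for $d_{\max}(f)$---is exactly the paper's proof, and it is complete as it stands. Where you diverge is in declaring the NO branch ($d_{\max}(f)>\lceil n/2\rceil$) to be ``the main technical obstacle'' and leaving it unresolved. It is not an obstacle: a plain binary search over $[1,n]$ makes $O(\log n)$ oracle calls, each on the same polynomial $f$ with some $m\le n$, which already establishes the substantive claim that \textsc{Factor Degree} is nearly linear time reducible to \textsc{Moore-Det}. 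The only thing at stake is whether the cost per call should be written $T(n,n,\log q)$ rather than $T(n,\lceil n/2\rceil,\log q)$; for any $T$ that is polynomially bounded in $m$ (the paper freely makes such without-loss-of-generality normalizations of $T$ elsewhere, e.g.\ the footnote in \S 2) these agree up to a constant, so nothing needs to be repaired.

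Moreover, the two repairs you sketch would not work and should be dropped. ``Isolating the large factor'' means computing $\gcd(f,\prod_{i\le \lceil n/2\rceil}(x^{q^i}-x))$, which is precisely the distinct-degree-factorization computation whose $3/2$-exponent cost the whole paper is trying to avoid, so it cannot appear inside a nearly linear time reduction. And ``mirroring into the complement range'' to output the degree of a small factor fails because the oracle only answers queries of the form $d_{\max}(\cdot)\le m$; it gives no handle on $d_{\min}$ or on the degree of any particular non-maximal factor. Delete the final paragraph of your proposal and state the binary search over $[1,n]$ directly (with the observation about $T(n,n,\log q)$ versus $T(n,\lceil n/2\rceil,\log q)$ if you want to match the constant in the theorem statement), and your proof coincides with the paper's.
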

\begin{proof}
By Lemma \ref{carlitz_factorial}, for a monic square free $f(x)\in \F_q[x]$ and $m \leq \deg(f(x))$, we have $\Delta(1, x,\ldots, x^{m}) \mod f(x)=0$ if and only if 
\begin{equation}\label{carlitz_factorial_equation}
\prod_{0\leq i<j\leq m}\left(x^{q^{j-i}}-x\right)^{q^i} = 0 \mod f(x).
\end{equation}
Since $f(x)$ is square free, Equation \ref{carlitz_factorial_equation} holds if and only if every irreducible factor of $f(x)$ has degree at most $m$. Given oracle access to an algorithm for Problem \ref{moore_det_problem}, a binary search leads to the determination of the largest irreducible factor degree of $f(x)$. 
\end{proof}
\subsection{Vandermonde Determinants}
The determinants involved in the previous subsection were both Moore and Vandermonde. Here we study determinants that are Vandermonde but not Moore. Further, the matrices involved are of dimension significantly smaller than the degree of the polynomial factored.

For a positive integer $m$, let $$S_m:=\{0,1,2,\ldots,\lfloor\sqrt{m}\rfloor-1,\lfloor\sqrt{m}\rfloor,2\lfloor\sqrt{m}\rfloor,3\lfloor\sqrt{m}\rfloor,\ldots,(\lfloor\sqrt{m}\rfloor-1)\lfloor\sqrt{m}\rfloor,\lfloor\sqrt{m}\rfloor^2,m\}.$$
This ensures that $|S_m|\leq 2\lfloor\sqrt{m}\rfloor+1$ and $\{j-i|i,j \in S_m,i<j\} =\{1,2,\ldots,m-1,m\}$.

For a positive integer $m$, let $V_{m}(x) \in \F_q[x]$ denote the determinant of the Vandermonde matrix with first row $\{x^{q^i},i\in S_m\}$. 
\begin{lemma}\label{vandemonde_carlitz}
For every monic square free $f(x)\in \F_q[x]$ and every positive integer $m$,
$$\gcd(V_{m}(x),f(x)) = \gcd\left(\prod_{0\leq i \leq m}\left(x^{q^{i}}-x\right), f(x)\right).$$
\end{lemma}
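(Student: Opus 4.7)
The plan is to reduce the gcd equality to an equivalence about which irreducible factors of $f(x)$ appear on each side, and then verify this via the combinatorial property built into $S_m$.

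First, I would apply the Vandermonde determinant formula and the characteristic $p$ identity $x^{q^j}-x^{q^i}=(x^{q^{j-i}}-x)^{q^i}$ (exactly as in the proof of Lemma~\ref{carlitz_factorial}) to write
\begin{equation*}
V_m(x) = \prod_{\substack{i,j\in S_m\\ i<j}}(x^{q^j}-x^{q^i}) = \prod_{\substack{i,j\in S_m\\ i<j}}(x^{q^{j-i}}-x)^{q^i}.
\end{equation*}

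Next, since $f(x)$ is squarefree, for any $g(x)\in\F_q[x]$ the polynomial $\gcd(f(x),g(x))$ is determined by the \emph{set} of monic irreducible factors of $f(x)$ dividing $g(x)$, independent of any multiplicities in $g(x)$. Hence it suffices to show that a monic irreducible $\pi(x)\mid f(x)$ divides $V_m(x)$ if and only if it divides $\prod_{i}(x^{q^{i}}-x)$. Using the standard fact that an irreducible polynomial of degree $k$ over $\F_q$ divides $x^{q^d}-x$ exactly when $k\mid d$, the left-hand condition becomes ``$\deg(\pi)$ divides $j-i$ for some pair $i<j$ in $S_m$,'' and the right-hand condition becomes ``$\deg(\pi)$ divides some $d\in\{1,2,\ldots,m\}$.''

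The key step is then the combinatorial identity
\begin{equation*}
\{\,j-i : i,j\in S_m,\ i<j\,\} \;=\; \{1,2,\ldots,m\},
\end{equation*}
which is precisely the defining property of $S_m$ recorded at the beginning of the subsection; it is verified directly from the explicit form of $S_m$ (differences within $\{0,1,\dots,b-1\}$ cover $\{1,\dots,b-1\}$; differences with $j\in\{b,2b,\dots,b^2\}$ and $i\in\{0,\dots,b-1\}$ cover $\{1,\dots,b^2\}$; and including $j=m$ covers the remaining values up to $m$). With this identity, the two divisibility conditions on $\deg(\pi)$ coincide, and the two gcds select the same set of monic irreducible factors of $f(x)$, proving the lemma.

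There is no serious obstacle. The one subtlety is that the factorization of $V_m(x)$ produces factors raised to the $q^i$ power, which would be problematic if $f(x)$ had repeated roots; squarefreeness of $f(x)$ is precisely what lets us discard these exponents. I would also note that the index convention $0\le i\le m$ in the statement is to be read starting from $i=1$ (the $i=0$ factor $x^{q^0}-x$ vanishes), so that the right-hand side is the meaningful polynomial $\prod_{d=1}^{m}(x^{q^d}-x)$.
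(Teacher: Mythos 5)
Your proof follows the paper's argument exactly: expand $V_m(x)$ via the Vandermonde determinant formula, rewrite each factor as $(x^{q^{j-i}}-x)^{q^i}$, and use squarefreeness of $f(x)$ together with the difference-set property of $S_m$ to identify the two gcds. The additional detail you supply --- reducing to which monic irreducible factors of $f(x)$ divide each side, and flagging the degenerate $i=0$ term on the right --- only makes explicit what the paper's proof leaves implicit.
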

\begin{proof}
By the Vandermonde determinant formula, 
\begin{equation}\label{vandermonde_equation1}
  V_{m}(x) = \prod_{i,j \in S_m|i<j}\left(x^{q^j}-x^{q^i}\right) =  \prod_{i,j \in S_m|i<j}\left(x^{q^{j-i}}-x\right)^{q^i}.
\end{equation}
Since $f(x)$ is square free and $\{j-i|i,j \in S_m,i<j\} =\{1,2,\ldots,m-1,m\}$, 
\begin{equation}\label{vandermonde_equation2}
\gcd\left(\prod_{i,j \in S_m|i<j}\left(x^{q^{j-i}}-x\right)^{q^i}, f(x)\right) = \gcd\left(\prod_{0\leq i \leq m}\left(x^{q^{i}}-x\right), f(x)\right).
\end{equation}
By Equations \ref{vandermonde_equation1} and \ref{vandermonde_equation2}, the lemma follows.
\end{proof}
\textsc{Vandermonde Det} may be restated as
\begin{problem}\label{vandermonde_det_problem} Given a square free monic polynomial $f(x) \in \F_q[x]$ of degree $n$ and a positive integer $m\leq n$, decide if $V_{m}(x) \mod f(x) $ is $0$.
\end{problem}
We next sketch a fast algorithm for Problem \ref{vandermonde_det_problem}. Since $|S_m| \leq \sqrt{n}$, the first row $\{x^{q^i} \mod f(x),i \in S_m\}$ can be computed in $\widetilde{O}(n^{3/2} \log q+n\log^2 q)$ time using iterated Frobenius algorithm \cite{gs} implemented using fast modular composition  \cite{ku}. Given the first row of a Vandermonde matrix over a commutative ring, the square of its determinant can be computed with nearly linearly many operations over the ring \cite{pan}. Hence, $V_{m}(x) \mod f(x)$ can be zero tested in $\widetilde{O}(n^{3/2} \log q+n\log^2 q)$ time.
\begin{theorem}\label{vandermonde_reduction_theorem}
If there is a $T(n,m,\log q)$ time algorithm for Problem \ref{vandermonde_det_problem}, then \textsc{Factor Degree} can be solved in $O(T(n,\lceil\sqrt{n}\rceil, \log q) \log n)$ time. That is, \textsc{Factor Degree} is nearly linear time reducible to \textsc{Vandermonde Det}.
\end{theorem}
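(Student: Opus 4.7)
The plan is to mimic the proof of Theorem \ref{moore_zero_theorem}, with Lemma \ref{vandemonde_carlitz} playing the role of Lemma \ref{carlitz_factorial}. First, I would translate the Vandermonde zero test into a statement about factor degrees: by Lemma \ref{vandemonde_carlitz}, combined with the fact that $\prod_{i=1}^{m}(x^{q^{i}}-x)$ is the product of all monic irreducibles in $\F_q[x]$ of degree dividing some $i\in\{1,\dots,m\}$ -- equivalently, of degree at most $m$ -- the squarefreeness of $f$ implies $V_{m}(x)\equiv 0\pmod{f(x)}$ if and only if every irreducible factor of $f$ has degree at most $m$. The predicate is therefore monotone in $m$: false for $m<d_{\max}$ and true for $m\geq d_{\max}$, where $d_{\max}$ is the largest irreducible factor degree of $f$.

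Second, I would perform a binary search on $m\in[1,n]$, invoking the oracle for Problem \ref{vandermonde_det_problem} at each step, to locate the smallest $m^{*}$ for which $V_{m^{*}}(x)\equiv 0\pmod{f(x)}$. By the equivalence above $m^{*}=d_{\max}$, which is a valid output for \textsc{Factor Degree}. The binary search uses $O(\log n)$ oracle calls.

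Third, for the running time I would note that each query involves a Vandermonde matrix of dimension $|S_{m}|\leq 2\lfloor\sqrt{m}\rfloor+2=O(\sqrt{n})$, since $m\leq n$. Reading $T(n,m,\log q)$ as nondecreasing in the matrix-dimension parameter $\lceil\sqrt{m}\rceil\leq\lceil\sqrt{n}\rceil$, every oracle query costs at most $T(n,\lceil\sqrt{n}\rceil,\log q)$, and the total cost is $O\bigl(T(n,\lceil\sqrt{n}\rceil,\log q)\log n\bigr)$, as claimed.

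I do not anticipate any significant obstacle; the argument is a direct transcription of the Moore-Det reduction (Theorem \ref{moore_zero_theorem}). The one subtle point worth flagging in the write-up is the parameter convention: since the matrix dimension in Problem \ref{vandermonde_det_problem} scales as $\Theta(\sqrt{m})$ rather than as $m$ itself, the relevant per-query bound is $T(n,\lceil\sqrt{n}\rceil,\log q)$ and not $T(n,n,\log q)$.
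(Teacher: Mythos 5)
Your proposal is correct and follows essentially the same route as the paper: the paper notes via Lemma \ref{vandemonde_carlitz} (together with Lemma \ref{carlitz_factorial}) that $\gcd(V_m(x),f(x))=\gcd(\Delta(1,x,\dots,x^m),f(x))$, so the Vandermonde zero test coincides with the Moore zero test and the binary-search reduction of Theorem \ref{moore_zero_theorem} applies verbatim; you simply inline that binary search rather than citing it. Your remark on the parameter convention (matrix dimension $\Theta(\sqrt{m})$, hence per-query cost $T(n,\lceil\sqrt{n}\rceil,\log q)$) correctly explains the bound as stated.
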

\begin{proof}
By Lemma \ref{carlitz_factorial} and Lemma \ref{vandemonde_carlitz}, for every monic square free $f(x)\in \F_q[x]$ and positive integer $m\leq \deg(f(x))$, $$\gcd(V_{m}(x),f(x)) = \gcd(\Delta(1, x,\ldots, x^{m}),f(x)).$$ 
Hence Problems \ref{moore_det_problem} and  \ref{vandermonde_det_problem} are identical and our theorem follows from Theorem \ref{moore_zero_theorem}.
\end{proof}
\begin{remark}  
We may pose functional variants of Problems \ref{moore_det_problem} and \ref{vandermonde_det_problem}, by asking for the respective determinants module $f(x)$, instead of merely deciding if they are zero.
\begin{problem}\label{moore_det_problem_functional} Given a square free monic polynomial $f(x) \in \F_q[x]$ of degree $n$ and a positive integer $m\leq n$, compute $\Delta(1, x,\ldots, x^{m}) \mod f(x)$.
\end{problem}
\begin{problem}\label{vandermonde_det_problem_functional} Given a square free monic polynomial $f(x) \in \F_q[x]$ of degree $n$ and a positive integer $m\leq n$, compute $(V_{m}(x))^i \mod f(x)$ for some positive integer $i$.\end{problem}
As outlined, $3/2$ exponent algorithms are known for problems \ref{moore_det_problem_functional} and \ref{vandermonde_det_problem_functional}. By \cite[Thm~8.5]{ku},  \textsc{Factor} is nearly linear time reducible to each of the problems \ref{moore_det_problem_functional} and \ref{vandermonde_det_problem_functional}.  
\end{remark}

\bibliographystyle{plain}
\bibliography{ref}

\end{document}